\newcommand\resetsubfigs{\setcounter{sub\@captype}{0}}
\definecolor{purple}{rgb}{0.5,0,0.5}
\newcommand{\ra}{\rightarrow}
\newcommand{\End}{{\rm End}}
\newcommand{\CC}{{\mathbb C}}
\newcommand{\ZZ}{{\mathbb Z}}
\newcommand{\RR}{{\mathbb R}}
\newcommand{\HH}{{\mathbb H}}
\newcommand{\cO}{\mathcal O}
\newcommand{\cG}{\mathcal G}
\newcommand{\Q}{{\mathcal Q}}
\newcommand{\fotimes}{\mathbin{\widehat\otimes}}
\newtheorem{theorem}{Theorem}[section]
\newtheorem{lemma}[theorem]{Lemma}
\title{Fermionic Matrix Product States and One-Dimensional Short-Range Entangled Phases with Anti-Unitary Symmetries}
\author{Alex Turzillo and Minyoung You \\ {\small\it California Institute of Technology, 1200 E California Blvd, Pasadena, CA 91125}}
\date{September 29, 2017}
\begin{document}

\maketitle




\begin{abstract}
We extend the formalism of Matrix Product States (MPS) to describe one-dimensional gapped systems of fermions with both unitary and anti-unitary symmetries. Additionally, systems with orientation-reversing spatial symmetries are considered. The short-ranged entangled phases of such systems are classified by three invariants, which characterize the projective action of the symmetry on edge states. We give interpretations of these invariants as properties of states on the closed chain. The relationship between fermionic MPS systems at an RG fixed point and equivariant algebras is exploited to derive a group law for the stacking of fermionic phases protected by general fermionic symmetry groups.


\end{abstract}

\section{Introduction and Overview}

Matrix Product States (MPS) have proven useful at describing the ground states of gapped local Hamiltonians in one spatial dimension \cite{Hastings,MPSreview}. This representation leads to a classification of interacting short-range entangled (SRE) bosonic phases with a symmetry $G$ in terms of the group cohomology of $G$ \cite{ChenGuWenone,fidkit}. One-dimensional systems of fermions are related to these bosonic systems by the Jordan-Wigner transformation, and this fact has been exploited to classify fermionic SRE phases \cite{fidkit,ChenGuWentwo}. The language of MPS was recently extended to an intrinsically fermionic formalism in what has been dubbed fermionic MPS (fMPS) \cite{FMPS,KTY2}. This approach has the benefits of straightforwardly describing systems on a closed chain with twisted boundary conditions and allowing one to derive a group law for the stacking of fermionic SRE phases. It also makes manifest the relation to spin-TQFT, which is conjectured to describe the long-distance physics of fermionic systems \cite{KTY2}. Bosonic and fermionic systems with a time-reversal symmetry have also been classified by use of MPS, by allowing the projective action of the symmetry on edge degrees of freedom to be anti-unitary \cite{fidkit,ChenGuWentwo}. It is natural to ask whether the formalism of fMPS can be applied to these systems. Bosonic systems with anti-unitary symmetries have previously been studied with an approach similar to ours \cite{ShiozakiRyu}. Fermionic systems have been described in Ref. \cite{FMPS}, and their results agree with ours.

The structure of the paper and its main results are as follows. In Section \ref{fermionicMPS}, we review the formalism of $\cG$-equivariant fMPS for a unitary on-site symmetry $\cG$. We recall how fermionic SRE phases are classified by Morita classes of equivariant algebras and how the invariants $\alpha$, $\beta$, and $\gamma$ that characterize these algebras appear in the action of $\cG$ on edge degrees of freedom. We then derive interpretations of the invariants on the closed chain that extend the results of Ref. \cite{KapustinThorngren}, which were discovered in the context of spin-TQFT. Next, time-reversing symmetries and their relation to spatial parity are discussed. The generalizations of the three invariants to phases with such symmetries are derived and interpreted. In Section \ref{stackingsection}, a general stacking law \eqref{stackinglaw} is derived for fermionic SRE phases with a symmetry $\cG$ that is a central extension by fermion parity of a bosonic symmetry group that may contain anti-unitary symmetries, and we contrast this result with the bosonic stacking law. In Section \ref{examplesection}, we consider several examples, recovering the $\ZZ/8$ classification of fermionic SRE phases of symmetry class BDI $(T^2=1)$ and the $\ZZ/2$ classification of class DIII $(T^2=P)$.

The authors would like to thank Anton Kapustin for helpful discussions throughout the production of this paper. This research was supported by the U.S. Department of Energy, Office of Science, Office of High Energy Physics, under Award Number DE-SC0011632.

\section{Fermionic MPS}\label{fermionicMPS}

\subsection{Unitary symmetries}

We begin by briefly recalling the fMPS formalism, leaving many of the details to Ref. \cite{KTY2}.

The symmetries of a fermionic system form a finite supergroup $(\cG,p)$; that is, a finite group $\cG$ with a distinguished central involution $p\in \cG$ called \emph{fermion parity}. Every supergroup arises as a central extension of a group $G_b\simeq\cG/\ZZ_2$ of bosonic symmetries by $\ZZ/2=\{1,p\}$:
\begin{equation}
\ZZ_2\xrightleftharpoons[t]{i}\cG\xrightleftharpoons[s]{b}G_b.
\end{equation}
Such extensions are classified by cohomology classes $[\rho]\in H^2(G_b,\ZZ/2)$. A trivialization $t:\cG\ra\ZZ/2$ with $t(p)=0$ defines a representative $\rho$ of this class by $\rho(b(g),b(h))=(\delta t)(g,h)$.

MPS is an ansatz for constructing one-dimensional gapped systems with such a symmetry. A translation-invariant $\cG$-symmetric MPS system consists of the following data: a physical one-site Hilbert space $A$ with a unitary action $R$ of $\cG$, a virtual space $V$ with a projective action $Q$ of $\cG$, and an injective MPS tensor $T: A \rightarrow \End V$ that satisfies an equivariance condition: for all $a \in A,g\in\cG$,
\begin{equation}\label{equivariance}T(R(g) a) = Q(g) T(a) Q(g)^{-1}.\end{equation}From this data, one may construct a gapped, symmetric, and frustration-free lattice Hamiltonian \cite{MPSclassif}.

For our purposes, we are only interested in MPS at an RG fixed point, where a system exhibits the physics universal to its gapped phase. Physical sites are blocked together under real-space RG. At a fixed point, this procedure endows the space $A$ with a product $m:A\otimes A\ra A$, making $A$ into a finite-dimensional associative algebra and the space $V$ into a faithful module over $A$ with structure tensor $T$ \cite{KTY1}; that is,\begin{equation}T(a)T(b)=T(m(a\otimes b))\end{equation}for all $a,b\in A$. It follows that $T(m(R(g)a \otimes R(g) b))=T(R(g) m(a\otimes b))$. Since $T$ is injective, $m$ satisfies\begin{equation}\label{equialg}R(g)m(a\otimes b) = m(R(g)a \otimes R(g) b).\end{equation}We say $A$ is a $\cG$-equivariant algebra. Because $T$ can be put into a canonical form, $A$ is semisimple \cite{KTY1}. It can also be viewed as the algebra of linear operators on the space of low energy boundary states \cite{fidkit}.

Our approach is to work with ground states of MPS systems, rather than their Hamiltonians. Every ground state of an MPS Hamiltonian has the form of a generalized MPS, which we now describe. Given a $(\cG, p)$-equivariant MPS system described by an algebra $A$ and tensor $T: A \rightarrow \End (V)$, one obtains a generalized MPS by choosing an observable $X\in\End(V)$ that \emph{supercommutes} with $T(a)$;\footnote{Herein lies the difference between the bosonic and fermionic MPS formalisms. In the bosonic case, $X$ commutes with $T(a)$ regardless of its charge under $P$ (which is not a distinguished symmetry). For this reason, the twisted sector state spaces and how they are acted on by symmetries typically differ between a fermionic system and its Jordan-Wigner transform.} that is
\begin{equation}
\label{obs}XT(a)=(-1)^{|a||X|}T(a)X
\end{equation}
for all $a\in A$ such that $R(p)a=(-1)^{|a|}a$, where the parity of $X$ is defined by $PXP^{-1}=(-1)^{|X|}X$, for $P:=Q(p)$. A linear map $X$ satisfying this condition is called an even or odd $\ZZ/2$-graded module endomorphism, depending on its parity.

For $X$ satisfying the supercommutation rule \eqref{obs}, the generalized MPS has conjugate wavefunction
\begin{equation}\label{twistedMPS}
\bra{\psi_{T,g}^X}=\sum_{i_1,\ldots,i_N}\Tr_U[Q(g)XT(e_{i_1})\cdots T(e_{i_N})]\bra{i_1\cdots i_N}
\end{equation}
in the $g$-twisted sector. When $\cG$ splits as $G_b\times\{1,p\}$, one can choose a splitting $t$ in order to make sense of the $g$-twisted sector as a $b(g)$-twisted NS or R sector. The $g$-twisted sector for $t(g)=0$ consists of states on the circle with NS spin structure and a $G_b$ gauge field of holonomy $b(g)$, while the $g$-twisted sector for $t(g)=1$ consists of states on the circle with R spin structure and a $G_b$ gauge field of holonomy $b(g)$.  When $\cG$ is non-split, one does not have independent spin structures and gauge fields.\footnote{The appropriate geometric structure -- the $\cG$-Spin structure -- is discussed in Ref. \cite{KTY2}.}

The MPS description of a gapped system is not unique. For this reason, fixed point systems -- and therefore gapped phases -- are classified not by the algebras themselves but by their Morita classes \cite{KTY1}. After all, it is the module category (the $T$'s and $X$'s) of $A$ that determines the system's ground states.

\subsection{Equivariant algebras for fermionic SRE phases}

For the remainder of the paper, we will restrict our attention to SRE phases. The algebras that correspond to fixed points in such phases are either of the form $\End(U)$ for $U$ a projective representation $Q$ of $\cG$ or of the form $\End(U_b) \otimes\CC\ell(1)$, for $U_b$ a projective representation of $G_b$, with the following actions of $\cG$ \cite{KTY2}. We refer to algebras of the first (second) type as ``even'' (``odd''). Odd algebras are only present when the extension for $\cG$ splits.

The action of $\cG$ on an even algebra $A=\End(U)$ is simply
\begin{equation}
R(g)\cdot M=Q(g)MQ(g)^{-1}.
\end{equation}
Two even algebras are Morita equivalent if their projective representations have the same $[\omega]\in H^2(\cG, U(1))$ \cite{Ostrik}. It is shown in the appendix that $[\omega]$ is equivalent to a pair $(\alpha, \beta) \in C^2(G_b, U(1)) \times C^1(G_b, \ZZ/2)$ that satisfies $\delta \alpha = \sfrac{1}{2} \beta \cup \rho$ and $\delta \beta = 0$, up to coboundaries.\footnote{In our notation, $\alpha$ takes values in $\mathbb{R}/\ZZ = [0, 1)$ and $\sfrac{1}{2} \beta \cup \rho$ in $\{0, \sfrac{1}{2} \} \subset [0,1)$. Crucially, $\alpha$ is defined up to a $\cG$-coboundary with arguments in $G_b$; only when $\cG$ splits is this a $G_b$-coboundary. This subtlety is relevant when $\cG=\ZZ_4^{FT}$, see Section \ref{examplesection}.} In particular, when $\cG$ splits, $\rho$ is trivial and the equivalence classes defining the phase are $[\alpha],[\beta]\in H^2(G_b,U(1))\times H^1(G_b,\ZZ/2)$. The invariant $[\alpha]$ is simply $[\omega]$ pulled back by $s$ to $G_b$, while $\beta$ measures the parity of $Q(g)$:\begin{equation}\label{Qgparity}PQ(g)P^{-1}=e^{i\pi\beta(b(g))}Q(g).\end{equation}

Let $\Gamma$ denote the generator of $\CC\ell(1)$ with $\Gamma^2=+1$. The action of $\cG$ on an odd algebra is
\begin{equation}\label{oddaction}
R(g)\cdot M\otimes\Gamma^m=(-1)^{(\beta(g)+t(g))m}Q(g)MQ(g)^{-1}\otimes\Gamma^m
\end{equation}
Morita classes of odd algebras are classified by the class $[\alpha] \in H^2(G_b, U(1))$ that describes the projective $G_b$-action $Q$ and the class $ [\beta] \in H^1(G_b, \ZZ/2)$ that describes the action of $G_b$ on $\Gamma$.

It is apparent from \eqref{oddaction} that changing the trivialization $t$ shifts $\beta$ by an arbitrary $\mu=(t'-t)\in Z^1(G_b;\ZZ/2)$. Therefore, unless $t$ is fixed, $\beta$ is not a well-defined invariant of odd algebras. On the other hand, for even algebras, redefining the action of $g_b$ to $Q(g_b)P^{\mu(g_b)}$ leaves $\beta$ unchanged but shifts $\alpha$ by $\sfrac{1}{2}\beta\cup\mu$.

In summary, when $\cG$ splits as $G_b \times \ZZ_2^F$, the $\cG$-symmetric fermionic SRE phases are classified by $[\alpha],[\beta],[\gamma] \in H^2(G_b, U(1)) \times H^1(G_b, \ZZ/2) \times \ZZ/2$ where $\gamma \in \ZZ/2$ tells us whether the algebra is even ($\gamma = 0$) or odd ($\gamma = 1$). When $\cG$ does not split, only the invariants $[\alpha]$ (not a cocycle) and $[\beta]$ are present.

A system in an SRE phase has exactly one state per twisted sector. To see this from the algebra, count independent solutions $X$. The unique simple module over an even algebra $\End(U)$ is $U$, and by Schur's lemma its only endomorphism is $X=\mathds{1}$. The unique faithful simple module over an odd algebra $\End(U_b)\otimes\CC\ell(1)$ is $U_b\otimes\CC^{1|1}$. It has two endomorphisms - an even one $X=\mathds{1}\otimes\mathds{1}$ and an odd one $X=\mathds{1}\otimes\sigma_y$. The former appears in the wavefunction of the NS sector MPS state and the latter for the R sector MPS state.

\subsection{Invariants of fermionic SRE phases}

The invariants $\alpha$, $\beta$, and $\gamma$ can also be extracted from an SRE fermionic MPS system without reference to the algebra $A$. Below we give a physical interpretation of these invariants as observable quantities.

We begin by studying how the MPS in the $g$-twisted sector transforms under the action of a unitary symmetry $h\in\cG_0$. Let $\omega$ be the cocycle that characterizes the projective action $Q$ on the module. Then \begin{align}\begin{split}\label{MPStrans1}
R(h)\cdot\Tr[Q(g)XT^i]\bra{i}
&=\Tr[Q(g)XQ(h)^{-1}T^iQ(h)]\bra{i}\\
&=e^{2\pi i(\omega(h,g)+\omega(hg,h^{-1})-\omega(h,h^{-1}))}\Tr[Q(hgh^{-1})[Q(h)XQ(h)^{-1}]T^i]\bra{i}.\end{split}
\end{align}We have used the fact that
\begin{equation}
\omega(h,h^{-1})=\omega(h^{-1},h),
\end{equation}
which follows from the cocycle condition.\footnote{We always work in a gauge $Q(1)=\mathds{1}$.} We see that under the action of a unitary symmetry $h$,\begin{enumerate}
\item The $g$-twisted sector maps to the $hgh^{-1}$-twisted sector.
\item The operator $X$ is conjugated by $Q(h)$.
\item States also pick up a phase of\begin{equation}\label{unitaryphase}e^{2\pi i(\omega(h,g)+\omega(hg,h^{-1})-\omega(h,h^{-1}))}.\end{equation}
\end{enumerate}We are now ready to interpret the three invariants.

\noindent\emph{Gamma.}

Suppose $h=p$ and $g\in\{1,p\}$. Then the phase \eqref{unitaryphase} vanishes, but there is still a sign coming from the conjugation of $X$ by $P$. It is always $+1$ if the algebra is of the from $\End(U)$ (i.e. if $\gamma=0$). If the algebra is of the form $\End(U_b)\otimes\CC\ell(1)$ (i.e. if $\gamma=1$), this sign is $+1$ in the NS sector and $-1$ in the R sector. Therefore we can conclude that the invariant $(-1)^\gamma$ is detected as the fermion parity ($p$-charge) of the $R$ sector state.

\noindent\emph{Beta.}

Continuing to take $h=p$, in the $g$-twisted sector the phase \eqref{unitaryphase} becomes\begin{equation}\label{beta}\sfrac{1}{2}\beta(g):=\omega(p,g)-\omega(g,p).\end{equation}This term satisfies $\beta(pg)=\beta(g)$ and takes values in $\{0,\sfrac{1}{2}\}$; in fact, it defines a $\ZZ/2$-valued cocycle of $G_b$. See the appendix for a proof. When $\gamma=0$, the sign $(-1)^{\beta(g_b)}$ is the fermion parity of the $g$-twisted sector for $g$ with $b(g)=g_b$. If $\cG$ splits, one can equivalently say that $(-1)^{\beta(g)}$ is the parity of the $b(g)$-twisted NS and R sectors. If $\cG$ splits, it is possible that $\gamma=1$. In this case, one must choose a splitting to interpret $\beta$ this way. Then $(-1)^{\beta(g)}$ is still the parity of the $b(g)$-twisted NS sector, but the parity of the $b(g)$-twisted R sector receives a contribution of $-1$ from conjugation of $X$ by $P$, in addition to the $\beta(g)$ term.

Note that $\beta(g)$ also describes the $g$-charge of the $p$-twisted (Ramond) sector for systems with $\gamma=0$. This is no coincidence: the phase \eqref{unitaryphase} agrees with Equation 4.11 of Ref. \cite{KT}, where it was derived from bosonic (i.e. $X=\mathds{1}$) TQFT. If $g$ and $h$ commute, one can sew together the ends of the cylinder to create a torus with holonomies $g$ and $h$ around its cycles. This torus evaluates to the phase
\begin{equation}
\omega(h,g)+\omega(hg,h^{-1})-\omega(h,h^{-1})=\omega(h,g)-\omega(g,h).
\end{equation}
This surface can also be evaluated as a torus with holonomies $h$ and $g^{-1}$, respectively, yielding
\begin{align}\begin{split}
\omega(g^{-1},h)+&\omega(g^{-1}h,g)-\omega(g^{-1},g)=\omega(h,g)+\omega(g^{-1},hg)-\omega(g^{-1},g)=\omega(h,g)-\omega(g,h)
\end{split}\end{align}
These are equal, as is required by consistency of the TQFT. In terms of states, the $h$-charge of the $g$-twisted sector is the same as the $g^{-1}$-charge of the $h$-twisted sector, as long as $g$ and $h$ commute. There is no analogous statement for systems with $\gamma=1$. Recall that $\beta(g)$ measures whether or not $g$ acts as $\sigma_z$ on the second factor of $\End(U)\otimes\CC\ell(1)$. Then $Q(g)$ anticommutes with $X=\mathds{1}\otimes\sigma_y$, and so the state picks up an extra charge of $\beta(g)$ which cancels with the sign \eqref{unitaryphase} for a total $g$-charge of $+1$ in the R sector.

\noindent\emph{Alpha.}

Consider the MPS state on a circle with two adjacent domain walls, parametrized by bosonic symmetries $g_b,h_b\in G_b$, as in Figure \ref{fig:alpha}. Upon fusing them, the state picks up a phase:
\begin{align}\begin{split}
&\Tr[Q(s(g_b))Q(s(h_b))T^i]\bra{i}=e^{2\pi i\omega(s(g_b),s(h_b))}\Tr[Q(s(g_b)s(h_b))T^i]\bra{i}
\end{split}\end{align}
These phases define a $G_b$-cochain
\begin{equation}\label{alphadef}
\alpha(g_b,h_b)=\omega(s(g_b),s(h_b)).
\end{equation}
If $\cG$ splits, then the fact that $\omega$ is a cocycle implies that $\alpha$ is as well. If the extension $\cG$ is instead defined by a nontrivial $\rho$, then $\alpha$ has coboundary $\sfrac{1}{2} \beta\cup\rho$. See the appendix for details. Redefining each $X=\mathds{1}$ by a sector-dependent phase shifts $\alpha$ by a $\cG$-coboundary with arguments in $G_b$, as expected.

Note that when $\beta$ and $\gamma$ are trivial, there are no fermionic states and the system is insensitive to spin structure. In this sense, $\alpha$ captures purely bosonic features of the system.

\noindent\hspace{0.02\textwidth}\fbox{
\begin{minipage}{0.94\textwidth}
\vspace{1mm}
\noindent\emph{In summary.}

\begin{itemize}
\item $(-1)^\gamma$ is the fermion parity of the untwisted R sector.
\item If $\gamma=0$, $(-1)^{\beta(g_b)}$ is the fermion parity of the $g$-twisted sector for either of the two $g$'s with $b(g)=g_b$. Alternatively, $(-1)^{\beta(g_b)}$ is the $g$-charge of the untwisted R sector. If $\gamma=1$, $(-1)^{\beta(g_b)}$ is the fermion parity of the $g_b$-twisted NS sector, as determined by the choice of splitting.
\item $e^{2\pi i\alpha(g_b,h_b)}$ is the phase due to fusing $g_b$ and $h_b$ domain walls.
\end{itemize}
\vspace{-0.1cm}
\end{minipage}
}

\begin{figure}
\centering
\begin{minipage}{.5\textwidth}
  \centering
  
  \begin{tikzpicture}

\draw (0,0) ellipse (1.5 and 0.5);
\draw (0,2.2) ellipse (1.5 and 0.5);
\fill [white] (-2,0) rectangle (2,0.6);
\draw (-1.5,0) -- (-1.5,2.2);
\draw (1.5,0) -- (1.5,2.2);

\draw[thick] (0,1.7) -- (0,0.75);
\draw[thick] (-0.52,-0.47) -- (-0.52,-0.1);
\draw[thick] (0.52,-0.47) -- (0.52,-0.1);
\draw[thick] (-0.52,-0.1) arc (180:120:1);
\draw[thick] (0.52,-0.1) arc (0:60:1);

\fill [black] (0,0.75) circle (0.1);

\node at (-0.32,-0.25) {$h$};
\node at (0.72,-0.25) {$g$};
\node at (0.25,1.45) {$gh$};

\end{tikzpicture}
  
  \captionof{figure}{fusion of domain walls}
  \label{fig:alpha}

\end{minipage}%
\begin{minipage}{.5\textwidth}
  \centering
  
\begin{tikzcd}
 & \ZZ_2^F \arrow{d}{} & \\ & \cG \arrow{d}{b} & \\ G_0 \arrow{r}{} & G_b \arrow{r}{x} & \ZZ_2^T
\end{tikzcd}

\captionof{figure}{symmetry data}
\label{fig:sym}
  
\end{minipage}
\end{figure}

\subsection{Anti-unitary and orientation-reversing symmetries}

More generally, a fermionic system may be invariant under anti-unitary symmetries as well as unitary ones. In this case, the full symmetry group $\cG$ is a central extension by $\ZZ_2^F$ of a bosonic symmetry group $G_b$, which is itself an extension of $\ZZ_2^T$ by a finite group $G_0$, as in Figure \ref{fig:sym}. The symmetry class $(\cG,p,x)$ is determined by a central $p\in\cG$ and a map $x:G_b\ra\ZZ/2$ that encodes whether a bosonic symmetry is unitary or anti-unitary. Note that the composition $x\circ b$, which we also call `$x$,' satisfies $x(p)=0$. Let $\cG_0$ denote its kernel.

A fixed point MPS system of symmetry class $(\cG,p,x)$ consists of a finite-dimensional semisimple associative algebra $A$ and a faithful module $T:A\ra\End(V)$, satisfying the equivariance conditions \eqref{equialg}, \eqref{equivariance} as before, only now the group action may be anti-unitary. In particular, the projective action on $V$ is given by a unitary operator $Q(g)$ for each $g\in\cG_0$ and an anti-unitary operator $Q(g)$ for each $g\notin\cG_0$ that satisfy
\begin{equation}\label{proj1}
Q(g)Q(h)=e^{2\pi i\omega(g,h)}Q(gh)
\end{equation}
for phases $\omega(g,h)$. By comparing $[Q(g)Q(h)]Q(k)$ and $Q(g)[Q(h)Q(k)]$, we find the $x$-twisted cocycle condition:
\begin{equation}\label{twistedcocycle}
\omega(g,h)+\omega(gh,k)=(-1)^{x(g)}\omega(h,k)+\omega(g,hk).\end{equation}
Redefining each $Q(g)$ by a $g$-dependent phase corresponds to shifting $\omega$ by an $x$-twisted coboundary. Therefore the action of $\cG$ on the module $V$ is characterized by a twisted cohomology class $[\omega]\in H^2(\cG,U(1)_T)$. The group action $R$ on $A$ is defined via \eqref{equivariance}. It will be convenient to define linear maps $M(g)$ by
\begin{equation}
M(g)=\left\{\begin{array}{lr}Q(g)&g\in\cG_0\\Q(g)K&g\notin\cG_0\end{array}\right.
\end{equation}
where $K$ denotes complex conjugation.

Unitary symmetries that reverse the orientation of one-dimensional space can also be described in this language. Let $x$ measure whether a symmetry reverses orientation. The natural generalization of \eqref{equivariance} is
\begin{align}\begin{split}\label{Tequimod}T(R(g)a)=M(g)T(a)M(g)^{-1}\qquad &\text{for }g\in\cG_0\\T(R(g)a)=M(g)T(a)^TM(g)^{-1}\qquad &\text{for }g\notin\cG_0.\end{split}\end{align}Let us introduce the following shorthand. For a matrix $\cO\in\End(V)$, define
\begin{equation}\begin{array}{ll}
\cO^{T0}=\cO,&\qquad\cO^{T1}=\cO^T,\\ \{\cO\}^0=\cO,&\qquad\{\cO\}^1=[\cO^{-1}]^T.
\end{array}\end{equation}
Since $R$ is a group homomorphism,
\begin{align}\begin{split}
M(g)\{M(h)\}^{x(g)}T(a)^{Tx(gh)}M(h)^{Tx(g)}M(g)^{-1}&=T(R(g)R(h)a)\\&=T(R(gh)a)\\&=M(gh)T(a)^{Tx(gh)}M(gh)^{-1}.
\end{split}\end{align}
This implies there exists a number $\omega(g,h)\in\RR/\ZZ$ such that
\begin{equation}\label{proj}M(g)\{M(h)\}^{x(g)}=e^{2\pi i\omega(g,h)}M(gh).\end{equation}By comparing the two equal expressions $M(g)\{M(h)\}^{x(g)}\{M(k)\}^{x(gh)}$ and $M(g)\{M(h)M(k)^{x(h)}\}^{x(g)}$, one recovers the $x$-twisted cocycle condition \eqref{twistedcocycle} for $\omega$.

From the perspective of two-dimensional spacetime, it is not surprising that time-reversal\footnote{By a well-known result of Wigner, an anti-unitary symmetry reverses the direction of time.} and space-reversal should be treated similarly. To make the connection more explicit, note that the physical Hilbert space carries the action of an anti-linear involution $*$, which we regard as CPT (see Ref. \cite{KTY1}). Using equivariance of the multiplication and (anti-)unitarity of $R(g)$ with respect to the inner product on the Hilbert space, it may be shown that $*$ commutes with $R(g)$ for all $g\in\cG$. With respect to the product on $A$, this map is an anti-automorphism. If $R(g)$ denotes the action of a time-reversing symmetry, $R(g)*$ is a unitary symmetry that reverses the orientation of space. Then
\begin{equation}
T(R(g)*a)=M(g)\overline{T(*a)}M(g)^{-1}=M(g)T(a)^TM(g)^{-1}.
\end{equation}
Moreover, since $*$ commutes with $R(g)$, the equivariance condition \eqref{equivariance} implies that $M(g)$ is unitary (up to a phase), so \eqref{proj1} and \eqref{proj} are equivalent (up to a coboundary). For the remainder of the paper, we suppress $*$ and simply write $R$ to denote a time-reversing or space-reversing symmetry.

\subsection{Invariants of fermionic SRE phases with anti-unitary symmetries}

As in the case of unitary symmetries, fermionic SRE systems at fixed points correspond to even algebras of the form $\End(U)$ and odd algebras of the form $\End(U_b)\otimes\CC\ell(1)$. However, when the symmetries may act anti-unitarily, the cohomology class characterizing the Morita class (and hence the SRE phase) is twisted.

We now discuss the meaning of the invariants $\alpha$, $\beta$, and $\gamma$ in the anti-unitary context, following the previous analysis. The form of the MPS conjugate wavefunction is \eqref{twistedMPS} as before. Consider the action of an anti-unitary symmetry $h\notin\cG_0$ on an MPS in the $g$-twisted ($g\in\cG_0$) sector:
\begin{align}\begin{split}
R(h)\cdot\Tr[Q(g)XT^i]\bra{i}&=\Tr[M(g)XM(h^{-1})(T^i)^TM(h^{-1})^{-1}]\bra{i}\\&=\Tr[M(h^{-1})^TX^TM(g)^TM(h^{-1})^{-1T}T^i]\bra{i}\\&\hspace{-1cm}=e^{2\pi i\omega(h,h^{-1})}\Tr[M(h^{-1})^TM(hg)^{-1}M(hg)X^T[M(h)M(g)^{-1T}]^{-1}T^i]\bra{i}\\&\hspace{-1cm}=e^{2\pi i(\omega(h,h^{-1})-\omega(h,g))}\Tr[[M(hg)M(h^{-1})^{-1T}]^{-1}[M(hg)X^TM(hg)^{-1}]T^i]\bra{i}\\&\hspace{-1cm}=e^{2\pi i(\omega(h,g^{-1})+\omega(hg^{-1},h^{-1})+\omega(g,g^{-1})-\omega(h,h^{-1}))}\Tr[Q(hg^{-1}h^{-1})[M(hg)X^TM(hg)^{-1}]T^i]\bra{i}\end{split}
\end{align}
where in the last line we use the fact that
\begin{equation}
\omega(hg^{-1}h^{-1},hgh^{-1})=-\omega(h,g)-\omega(hg,h^{-1})-\omega(hg^{-1},h^{-1})-\omega(h,g^{-1})-\omega(g^{-1},g)-2\omega(h^{-1},h),
\end{equation}
which can be verified by repeated application of the twisted cocycle condition. We see that under the action of an anti-unitary symmetry $h$,
\begin{enumerate}
\item The $g$-twisted sector maps to the $hg^{-1}h^{-1}$-twisted sector.
\item The operator $X$ is transposed, then conjugated by $M(hg)$.\footnote{If $X$ is Hermitian, this is the same as $X$ being conjugated by the anti-linear operator $Q(hg)$.}
\item States also pick up a phase of
\begin{equation}\label{antiunitaryphase}
e^{2\pi i(\omega(h,g^{-1})+\omega(hg^{-1},h^{-1})+\omega(g,g^{-1})-\omega(h,h^{-1}))}.
\end{equation}
\end{enumerate}
The phase matches Equation 4.12 of Ref. \cite{KT}. In particular, when $g$ acts on the R sector, it is
\begin{equation}
\sfrac{1}{2}\beta(g):=\omega(g,p)-\omega(p,g)+\omega(p,p),\qquad g\notin\cG_0.
\end{equation}
This phase satisfies $\beta(pg)=\beta(g)$, takes values in $\ZZ/2$, and, together with \eqref{beta}, is a $G_b$-cocycle. Refer to the appendix for a proof. When $\gamma=0$, this is the $g$-charge of the R sector. However, when $\gamma=1$, the charge receives an additional contribution from the transformation of $X$. Similar to in the unitary case detailed above, the total charge is the $\beta$-independent quantity $(-1)^{x(g)}$, so this interpretation of $\beta$ fails.

The invariant $\beta$ also has an interpretation in terms of edge states, like \eqref{Qgparity}.\footnote{If $g$ is anti-linear, the expression \eqref{Qgparity} is not invariant under the change of gauge $\omega\mapsto\omega+\delta\Lambda$.} A time-reversing symmetry $g\notin G_0$ maps $V$ to its dual space $V^*$, on which $p$ acts as $P^{-1}$, so the parity of $Q(g)$ is read off of
\begin{equation}
P^{-1}Q(g)P^{-1}=e^{i\pi\beta(\bar g)}Q(g),\qquad g\notin\cG_0
\end{equation}
A similar interpretation holds if $g$ reverses the orientation of space. Let $V^*\otimes V$ represent the tensor product of left and right edge state spaces. On this space, $g$ acts as
\begin{equation}
\psi_L\otimes\psi_R\mapsto Q(g)^{-1}(\psi_L\otimes\psi_R)^TQ(g)=Q(g)^{-1}\psi_R\otimes Q(g)\psi_L.
\end{equation}
Then $\beta$ appears as the result of acting by $P\otimes P^{-1}$, $g$, then $P^{-1}\otimes P$:
\begin{equation}
\psi\otimes 1\mapsto 1\otimes PQ(g)P\psi=e^{i\pi\beta(\bar g)}(1\otimes\psi).
\end{equation}

The meaning of $\alpha$ \eqref{alphadef} is more difficult to describe in Hamiltonian language.\footnote{In the Lagrangian picture, we expect $\alpha$ to be related to trivalent junctions of possibly orientation-reversing domain walls.} The lack of twisted sectors for anti-unitary symmetries means that $\alpha(g_b,h_b)$ has an interpretation as the phase due to fusing domain walls only when $g_b$ and $h_b$ are unitary. The rest of $\alpha$ appears in other places. It is convenient to first describe the invariant $\omega$. For two unitary symmetries $g,h\in\cG_0$, the phase $\omega(g,h)$ is due to fusing domain walls. It was shown in Ref. \cite{KT} that two extra families of phases -- which we now describe -- together with $\omega$ restricted to $\cG_0$, determine the full $\omega$ on $\cG$. The first family is the phases \eqref{antiunitaryphase} due to acting on the $g$-twisted sector by an anti-unitary symmetry $h$. The second family consists of the relative phases due to comparing, for each anti-unitary symmetry $g\notin\cG_0$, the crosscap state (see Refs. \cite{ShiozakiRyu,KT}) $\Tr[Q(g)Q(g)T^i]\bra{i}$ to the MPS state in the $g^2$-twisted sector. These phases have the simple form $\omega(g,g)$. Note that these data are not gauge invariant, and the equivalence classes of them under shifting $\omega$ by a twisted coboundary do not take a simple form. Now that we have described the full $\omega$, the full $\alpha$ can be recovered by restricting to $G_b$. As we demonstrate in the appendix, the result is a $G_b$ cochain whose $x$-twisted coboundary is $\beta\cup\rho$.

Finally, $\gamma$ is the fermion parity of the untwisted Ramond sector, as in the unitary case.

\section{The fermionic stacking law}\label{stackingsection}

Gapped fermionic phases form a commutative monoid under the operation of stacking. The result of stacking fixed point systems corresponding to algebras $A_1$ and $A_2$ is the system corresponding to the supertensor product $A_1\fotimes A_2$, defined by the multiplication law $(a_1\fotimes a_2)(b_1\fotimes b_2)=(-1)^{|a_2||b_1|}a_1b_1\fotimes a_2b_2$ \cite{FMPS,KTY2}. SRE phases are precisely those that are invertible under stacking, and so they form a group. The goal of this section is to derive this group structure on the set of SRE phases in terms of the invariants $\alpha$, $\beta$, and $\gamma$. Our plan is to follow the argument presented in Appendix D of Ref. \cite{KTY2}, while taking into account that $Q(g)$ is anti-linear when $x(g) = 1$. We summarize the results at the end of the section.

The following discussion relies on a result proven in the appendix: that one can choose a gauge such that the twisted cocycle $\omega$ is related to $\alpha$ and $\beta$ by, for all $g,h\in\cG$, where $\bar g$ is short for $b(g)$,
\begin{equation}
\omega(g,h) = \alpha(\bar g,\bar h) + \sfrac{1}{2}\beta(\bar g)t(h).
\end{equation}

There are three cases to consider: the stacking of 1) two even algebras, 2) an even and an odd algebra, 3) two odd algebras. When $\cG$ does not split, there are no odd algebras so we need only consider the first case.

\noindent\emph{Even-Even Stacking.}

Consider the even algebras $\End (U_1)$ and $\End (U_2)$. Their tensor product is $\End (U_1 \fotimes U_2)$, where $U_1 \fotimes U_2$ carries a projective representation $Q = Q_1 \fotimes Q_2$. Then
\begin{align}\begin{split}
Q(g) Q(h) &= \left(Q_1(g) \fotimes Q_2(g) \right) \left(Q_1(h) \fotimes Q_2(h) \right) \\
&= (-1)^{\beta_2(\bar g)\beta_1(\bar h)} Q_1(g) Q_1(h) \fotimes Q_2(g) Q_2(h) \\  
&= (-1)^{(\beta_2 \cup \beta_1) (\bar g,\bar h)} e^{2 \pi i (\alpha_1(\bar g,\bar h) +\sfrac{1}{2} \beta_1(\bar g) t(h))} e^{2 \pi i (\alpha_2(\bar g,\bar h) + \sfrac{1}{2} \beta_2(\bar g) t(h))} Q_1(gh) \fotimes Q_2(gh) \\ 
&= \exp \left( 2 \pi i (\alpha_1 + \alpha_2 + \sfrac{1}{2} \beta_2 \cup \beta_1 ) (\bar g,\bar h) +\sfrac{1}{2} (\beta_1 + \beta_2) (\bar{g}) t(h) \right) Q(gh).
\end{split}
\end{align}
Thus the invariants of the stacked phase are $\alpha = \alpha_1 + \alpha_2 + \sfrac{1}{2}(\beta_1  \cup \beta_2)$,\footnote{We have used the fact that $\beta_2 \cup \beta_1$ is cohomologous to $ \beta_1 \cup \beta_2$ in $\ZZ/2$.} and $\beta = \beta_1 + \beta_2$. Since the stacked algebra is again even, $\gamma= 0$. The presence of anti-unitary symmetries does not affect even-even stacking.

\noindent\emph{Even-Odd Stacking.}

Now consider the even algebra $A_1 = \End(U_1)$, where $U_1$ carries a projective representation $Q_1$ of $\cG$, and the odd algebra $A_2=\End(U_2) \otimes\CC\ell(1)$, where $U_2$ carries a projective representation $Q_2$ of $G_b$. Their tensor product $\End (U_1) \fotimes (\End(U_2) \otimes\CC\ell(1))$ is isomorphic as an algebra to the odd algebra $\End (U_1\otimes U_2) \otimes\CC\ell(1)$ by the map
\begin{equation}JW:M_1 \fotimes (M_2 \otimes \Gamma^m) \mapsto M_1 P^m \otimes M_2\otimes \Gamma^{m +|M_1|},\end{equation}
which has inverse
\begin{equation} JW^{-1} : M_1 \otimes M_2 \otimes \Gamma^m \mapsto M_1 P^{m+ |M_1|} \fotimes (M_2 \otimes \Gamma^{m + |M_1|}),
\end{equation}where the parity of $M_1$ is defined by $Q_1$: $P_1M_1P_1=(-1)^{|M_1|}M_1$. This isomorphism respects the $\ZZ/2$-grading defined by the standard action of fermion parity on even and odd algebras.

It remains to determine the $G_b$ action on the odd algebra. For $g\in\cG$ with $t(g)=0$,
\begin{align}\label{Gbaction}
\begin{split}
JW \circ g \circ JW^{-1} \cdot \left(M_1 \otimes M_2 \otimes \Gamma^m\right) &= JW \circ g \cdot \left(M_1 P^{m +|M_1|} \fotimes (M_2 \otimes \Gamma^{m +|M_1|})\right) \\ 
&\hspace{-3cm}= JW\cdot\left(Q_1(g) M_1 P^{m + |M_1|} Q_1(g)^{-1} \fotimes (Q_2(\bar{g}) M_2 Q_2(\bar{g})^{-1} \otimes (-1)^{(m + |M_1|) \beta_2(\bar{g})} \Gamma^{m + |M_1|})\right) \\
&\hspace{-3cm}= (-1)^{(m +|M_1|) (\beta_1(\bar g) + \beta_2(\bar g))} Q_1(g) M_1 Q_1(g)^{-1} \otimes Q_2(\bar{g}) M_2 Q_2(\bar{g})^{-1} \otimes \Gamma^{m}
\end{split}
\end{align}
In order to read off the invariants from this group action, we must rewrite it in the standard form by defining $\tilde{Q}_1(g) = Q_1(g) P^{\beta_1(g) + \beta_2(g)}$\footnote{Adding a phase factor to $\tilde{Q}$ would have shifted the resulting $2$-cocycle $\alpha$ by an irrelevant coboundary. For example, if we had chosen a factor $i^{\beta_1(g)}$ as in Ref. \cite{KTY2}, we would have gotten $\beta_1 \cup x$ instead of $\beta_1 \cup \beta_1$ in the final answer.} and $Q(g)=\tilde Q_1(g)\otimes Q_2(\bar g)$. Then, continuing from \eqref{Gbaction}, 
\begin{equation}
g \cdot (M_1 \otimes M_2 \otimes \Gamma^m) = (-1)^{m (\beta_1 (\bar g) + \beta_2(\bar g))} (\tilde{Q}_1(g) \otimes Q_2(\bar g)) M_1 \otimes M_2 (\tilde{Q}_1(g)^{-1} \otimes Q_2(\bar g)^{-1} ) \otimes \Gamma^m,
\end{equation}
from which we read off the stacked invariant $\beta = \beta_1 + \beta_2$. And
\begin{align}
\begin{split}
Q(g)Q(h) &= \left( \tilde{Q}_1(g) \otimes Q_2(\bar g) \right) \left( \tilde{Q}_1(h) \otimes Q_2(\bar h) \right) \\ 
&= Q_1(g) P^{\beta_1(\bar g) + \beta_2(\bar g)} Q_1(h) P^{\beta_1(\bar h) + \beta_2(\bar h)}\otimes Q_2(\bar g) Q_2(\bar h) \\
&= (-1)^{\beta_1(\bar h)(\beta_1(\bar g) + \beta_2(\bar g))} Q_1(g) Q_1(h) P^{\beta_1(\bar{gh}) + \beta_2(\bar{gh})} \otimes Q_2(\bar g) Q_2(\bar h) \\
&=  \exp \left(2 \pi i( \alpha_1(g,h) + \alpha_2(g,h) +\sfrac{1}{2} (\beta_2 \cup \beta_1)(g,h) + \sfrac{1}{2} (\beta_1 \cup \beta_1) (g,h) ) \right) Q(gh),
\end{split}
\end{align}
from which we see $\alpha = \alpha_1 + \alpha_2 + \sfrac{1}{2} \beta_1 \cup \beta_2 + \sfrac{1}{2} \beta_1 \cup \beta_1$. There is no asymmetry: the $\sfrac{1}{2} \beta_1 \cup \beta_1$ term always comes from the $\beta$ of the even algebra.\footnote{Note that while $\beta_1 \cup \beta_1$ is an ordinary coboundary and hence could be ignored for phases without time-reversal, it is not a twisted coboundary and so cannot be ignored when time-reversing symmetries are present. By adding a twisted coboundary, we can put it in the form $\beta_1 \cup x$, which makes the dependence on time-reversal symmetry manifest.} Finally, $\gamma = 1$ since the stacked algebra is odd.

\noindent\emph{Odd-Odd Stacking.}

Consider the odd algebras $A_1 = \End(U_1) \otimes\CC\ell(1)$, where $U_1$ carries a projective representation $Q_1$ of $G_b$, and $A_2 = \End (U_2) \otimes \CC\ell(1)$, where $U_2$ carries a projective representation $Q_2$ of $G_b$. Their tensor product is given by $A_1 \fotimes A_2 \simeq \End (U_1 \otimes U_2 \otimes \CC^{1|1})$, since $\CC\ell(1) \fotimes\CC\ell(1) \simeq\CC\ell(2) \simeq \End (\CC^{1|1})$, via an isomorphism  
\begin{equation}\label{oddoddisom}
(M_1 \otimes \Gamma_1^m) \fotimes (M_2 \otimes \Gamma_2^n) \mapsto M_1 \otimes M_2 \otimes \sigma_1^m \sigma_2^n,
\end{equation}
where $\sigma_1$ and $\sigma_2$ are any two distinct Pauli matrices.
With respect to the action of fermion parity on the $\End(\CC^{1|1})$ factor as conjugation by $\sigma_3=-i\sigma_1\sigma_2$, this map is an isomorphism of $\ZZ/2$-graded algebras.

One choice\footnote{Again, had we chosen a different $G_b$-action on $\CC^{1|1}$ compatible with the action $g_b:\Gamma_i \mapsto (-1)^{\beta_i(g_b)}\Gamma_i$ on the $\CC\ell(1)$ factors, the $2$-cocycle $\alpha$ would be shifted by a twisted coboundary.} of $G_b$-action $Q$ on $U_1 \otimes U_2 \otimes \CC^2$, with respect to which \eqref{oddoddisom} is equivariant, is
\begin{align}
g: u_1 \otimes u_2 \otimes v \mapsto Q_1(\bar g) u_1 \otimes Q_2(\bar g) u_2 \otimes \sigma_1^{\beta_2(\bar g)} \sigma_2^{\beta_1(\bar g)} K^{x(\bar g)} v,
\end{align}
for $g\in\cG$ with $t(g)=0$, where $K$ denotes complex conjugation in a basis in which $\sigma_1$ and $\sigma_2$ are real. Then
\begin{align}
\begin{split}
Q(\bar g) Q(\bar h) &= \left( Q_1(\bar g) \otimes Q_2(\bar g) \otimes \sigma_1^{\beta_2(\bar g)} \sigma_2^{\beta_1(\bar g)} K^{x(\bar g)} \right) \left( Q_1(\bar h) \otimes Q_2(\bar h) \otimes \sigma_1^{\beta_2(\bar h)} \sigma_2^{\beta_1(\bar h)} K^{x(\bar h)} \right)  \\
&= e^{ 2 \pi i \alpha_1(\bar g,\bar h) } Q_1(\bar{gh}) \otimes e^{ 2 \pi i \alpha_2 (\bar g,\bar h) } Q_2(\bar{gh}) \otimes (-1)^{\beta_1(\bar g) \beta_2(\bar h) } \sigma_1^{\beta_1(\bar{gh})} \sigma_2^{\beta_2(\bar{gh})} K^{x(\bar{gh})}  \\
&= \exp \left( 2 \pi i (\alpha_1 + \alpha_2 + \sfrac{1}{2} \beta_1 \cup \beta_2)(\bar g,\bar h) \right) Q(\bar{gh}),
\end{split}
\end{align}
from which we see that $\alpha = \alpha_1 + \alpha_2 + \sfrac{1}{2} \beta_1 \cup \beta_2$. Since $U_1\otimes U_2$ is purely even, the parity of $Q$ comes from
\begin{align}
\begin{split}
P \sigma_1^{\beta_2(\bar g)} \sigma_2^{\beta_1(\bar g)} K^{x(\bar g)} P &= (-i \sigma_1 \sigma_2) \sigma_1^{\beta_2(\bar g)} \sigma_2^{\beta_1(\bar g)} K^{x(\bar g)} (-i \sigma_1 \sigma_2) \\ 
&= (-1)^{\beta_2(\bar g) + \beta_1(\bar g)} \sigma_1^{\beta_2(\bar g)} \sigma_2^{\beta_1(\bar g)} K^{x(\bar g)} (-1)^{x(\bar g)}.
\end{split}
\end{align}
We read off $\beta = \beta_1 + \beta_2 + x$. Finally, the stacked algebra is even, so $\gamma =0$.

\noindent\hspace{0.02\textwidth}\fbox{
\begin{minipage}{0.94\textwidth}
\vspace{1mm}
\noindent\emph{In Summary.}
\vspace{0.2cm}

The stacking law for the invariants $(\alpha,\beta,\gamma)$ is given by
\begin{align}\label{stackinglaw}
\begin{split}
&(\alpha_1, \beta_1, 0) \cdot (\alpha_2, \beta_2, 0) = (\alpha_1 + \alpha_2 + \sfrac{1}{2} \beta_1 \cup \beta_2, \beta_1 + \beta_2, 0) \\
&(\alpha_1, \beta_1, 0) \cdot (\alpha_2, \beta_2,1) = (\alpha_1 +\alpha_2 + \sfrac{1}{2} \beta_1 \cup \beta_2 +\sfrac{1}{2} \beta_1 \cup \beta_1, \beta_1 + \beta_2, 1) \\
&(\alpha_1, \beta_1, 1) \cdot (\alpha_2, \beta_2, 1) = (\alpha_1 + \alpha_2 + \sfrac{1}{2} \beta_1 \cup \beta_2,\beta_1 + \beta_2 + x, 0).
\end{split}
\end{align}
\vspace{-0.1cm}
\end{minipage}
}

While this stacking law looks different from that of Ref. \cite{FMPS}, the two results are actually consistent. The precise relationship between various formulations of the stacking law is discussed in Ref. \cite{duality23}.

This group law inherits the properties of commutativity and associativity from the tensor product of algebras. When $\cG$ does not split, $\gamma$ is not present, and the stacking law is simply 
\begin{equation}
(\alpha_1, \beta_1) \cdot (\alpha_2, \beta_2) = (\alpha_1 + \alpha_2 + \sfrac{1}{2} \beta_1 \cup \beta_2 , \beta_1 + \beta_2).
\end{equation}

We emphasize that while data $[\alpha,\beta]$ are equivalent to $[\omega]\in H^2(\cG,U(1)_T)$, the group structure on $H^2(\cG,U(1)_T)$ differs from \eqref{stackinglaw}. On the other hand, the stacking of bosonic SRE phases, which are also characterized by classes $[\omega]$, is described by the usual group structure on $H^2(\cG,U(1)_T)$.

\section{Examples}\label{examplesection}

\subsection{Class BDI fermions: $\cG=\ZZ_2^F\times\ZZ_2^T$}

Let us consider SRE phases with symmetry $\cG=\ZZ_2^F\times\ZZ_2^T$. The two classes $\alpha\in H^2(\ZZ_2^T,U(1)_T)=\ZZ/2$, two classes $\beta\in H^1(\ZZ_2^T,\ZZ/2)=\ZZ/2$, and two classes $\gamma\in\ZZ/2$ make for a total of eight phases. A straightforward application of the general stacking law \eqref{stackinglaw} reveals that these phases stack like the cyclic group $\ZZ/8$. In this section, we will reproduce this group law by exploiting the relationship between $\cG$-equivariant algebras, real super-division algebras, and Clifford algebras, which have Bott periodicity $\ZZ/8$.

We begin by describing simple $\cG$-equivariant algebras. The matrix algebra $M_2\CC$ represents the sole Morita class of simple complex algebras. This algebra has a unitary structure $*$ given by conjugate transposition. Its action fixes a basis $\{\mathds{1},X,Y,Z=-iXY\}$. On $C\ell_2\CC\simeq M_2\CC$, $*$ acts by Clifford transposition and complex conjugation of coefficients with respect to a pair of generators that square to $+1$.

There are two distinct real structures on $M_2\CC$ given by complex conjugation $T$ on the second component of $M_2\CC\simeq M_2\RR\otimes_\RR\CC$ and $M_2\CC\simeq\HH\otimes_\RR\CC$. The unitary structure $*$ of $M_2\CC$ acts by transposition on $M_2\RR$, complex conjugation on $\CC$, and inversion of the generators $\hat\imath$ and $\hat\jmath$ of $\HH$; that is, its fixed bases are
\begin{equation}\{\mathds{1}\otimes 1,X\otimes 1,iY\otimes i,Z\otimes 1\}\in M_2\RR\otimes_\RR\CC,\qquad\text{and}\qquad\{\mathds{1}\otimes 1,\hat\imath\otimes i,\hat\jmath\otimes i,\hat k\otimes i\}\in\HH\otimes_\RR\CC.\end{equation}
These bases have the same $T$-eigenvalues as they do $*T$-eigenvalues, where $*T$ acts as transposition on $M_2\RR$ and inversion of generators on $\HH$. Under the algebra isomorphisms $M_2\RR\simeq C\ell_{1,1}\RR\simeq C\ell_{2,0}\RR$ and $\HH\simeq C\ell_{0,2}\RR$, $*T$ acts by inverting the generators and products of generators that square to $-1$.

Let us derive the invariants $\alpha$ of these real structures. Pulled back from $M_2\RR\otimes_\RR\CC$ to $M_2\CC$, $T$ acts like complex conjugation and $*T$ like transposition; that is $M(t)=\mathds{1}$. Then\begin{equation}M(t)M(t)^{-1T}=\mathds{1},\end{equation}which means $\alpha(t,t)=\omega(t,t)=0$. Pulled back from $\HH\otimes_\RR\CC$, $T$ acts like complex conjugation and conjugation by $Y$, while $*T$ acts like transposition and conjugation by $Y$; that is $M(t)=Y$. Then $\alpha(t,t)=\sfrac{1}{2}$ since\begin{equation}M(t)M(t)^{-1T}=e^{i\pi}\mathds{1}.\end{equation}

By the Skolem-Noether theorem, a superalgebra structure on $M_2\CC$ is given by conjugation by an element that squares to one. If this element is $\mathds{1}$, the $\ZZ/2$-grading is purely even; otherwise, it has two even dimensions and two odd. All structures of the latter type are isomorphic in the absence of a real structure.

In the presence of the real structure $M_2\RR\otimes_\RR\CC$, there are three distinct gradings. First, there is the purely even grading, given by $P=\mathds{1}$. This structure has $\sfrac{1}{2}\beta(t)=\omega(t,p)=0$. Second, there is conjugation by $Z$ (or $X$), which gives $M_2\RR$ the superalgebra structure of $C\ell_{1,1}\RR$. Again, $\beta(t)=0$ since $P=Z$ means\begin{equation}PM(t)P^T=Z\mathds{1}Z^T=\mathds{1}.\end{equation}The matching of the invariants alludes to the fact that the real superalgebra structures $M_2\RR$ and $C\ell_{1,1}\RR$ are graded Morita equivalent. Third, there is conjugation by $Y$; that is, $P=Y$. Then $\beta(t)=1$ since\begin{equation}PM(t)P^T=Y\mathds{1}Y^T=e^{i\pi}\mathds{1}.\end{equation}The corresponding real Clifford algebra is $C\ell_{2,0}\RR$ and represents a distinct Morita class.

On the real structure $\HH\otimes_\RR\CC$, there are two distinct gradings. First, there is the purely even grading $P=\mathds{1}$, which has $\beta(t)=0$. The second grading is given by conjugation by $Z$ (or $X$ or $Y$) on $M_2\CC$ and gives $\HH$ the superalgebra structure of $C\ell_{0,2}\RR$. Then $\beta(t)=1$ since\begin{equation}PM(t)P^T=ZYZ^T=e^{i\pi}Y.\end{equation}

Now consider algebras of the form $M_2\CC\otimes C\ell_1\CC$. The second component $C\ell_1\CC$ has a unitary structure $*$ given by complex conjugation of coefficients of the generator $\Gamma$ that squares to $+1$. There are two distinct real structures on $C\ell_1\CC$ given by complex conjugation $T$ on the second component of $C\ell_{1,0}\RR\otimes_\RR\CC$ and $C\ell_{0,1}\RR\otimes_\RR\CC$. The unitary structure $*$ of $C\ell_1\CC$ acts by complex conjugation on $\CC$ and inversion of generators that square to $-1$; that is, the fixed bases are $\{1\otimes 1,\gamma\otimes 1\}$ for $C\ell_{1,0}\RR\otimes_\RR\CC$ and $\{1\otimes 1,\gamma\otimes i\}$ for $C\ell_{0,1}\RR\otimes_\RR\CC$. The map $*T$ is trivial on $C\ell_{1,0}\RR$ and inversion of the generator on $C\ell_{0,1}\RR$. Therefore, pulled back from $C\ell_{1,0}\RR\otimes_\RR\CC$ to $C\ell_1\CC$, $T$ is complex conjugation and $*T$ is trivial. From $C\ell_{0,1}\RR\otimes_\RR\CC$, $*T$ is inversion of $\Gamma$.

As discussed in Section \ref{fermionicMPS} and in Ref. \cite{KTY2}, we need only to consider a single $\ZZ/2$-grading on $M_2\CC\otimes C\ell_1\CC$ - the one where $M_2\CC$ is purely even and the generator of $C\ell_1\CC$ is odd. The algebra $M_2\CC\otimes C\ell_1\CC$ has four real structures: a choice of $M_2\RR$ or $\HH$ for the first component and $C\ell_{1,0}\RR$ or $C\ell_{0,1}\RR$ for the second. As was true for even algebras, the first choice determines whether $M(t)$ is $\mathds{1}$ or $Y$; that is, whether $\alpha(t,t)$ is $0$ or $\sfrac{1}{2}$. The second choice determines whether $*T$ inverts the odd generator; this is $\beta(t)$.

Due to the Morita equivalence $M_2\RR\sim\RR$, several of the eight Morita classes are represented by algebras of lower dimension; for example, $C\ell_{1,0}\RR$ instead of $M_2\RR\otimes_\RR C\ell_{1,0}\RR$. Up to this substitution, the eight real-structured superalgebras we found are complexifications of the eight \emph{central real super-division algebras} -- real superalgebras with center $\RR$ that are invertible under supertensor product up to graded Morita equivalence \cite{wall,trimble}. They constitute a set of representatives of the eight graded Morita classes of real superalgebras. These algebras appear in second column of Figure \ref{fig:Z8table}, next to their invariants in the third column.

Another set of Morita class representatives is the Clifford algebras $C\ell_{n,0}\RR$. In terms of these algebras, stacking is simple, as
\begin{equation}
C\ell_{n,0}\RR\fotimes C\ell_{m,0}\RR\simeq C\ell_{n+m,0}\RR
\end{equation}
and
\begin{equation}
C\ell_{n,0}\RR\sim C\ell_{m,0}\RR\qquad\text{for }n\equiv m\text{ mod }8.
\end{equation}
Each central super-division algebra can be matched with the Clifford algebra $\CC\ell_{n,0}\RR$, $n<8$ in its Morita class \cite{trimble}, as in the first column of Figure \ref{fig:Z8table}. This determines a $\ZZ/8$ stacking law on central super-division algebras and their invariants that agrees with the more general law \eqref{stackinglaw}.

Physically speaking, the $\ZZ/8$ classification is generated by the time-reversal-invariant Majorana chain \cite{fidkit,FK2}. While the symmetry protects pairs of dangling Majorana zero modes from being gapped out, turning on interactions can gap out these modes in groups of eight. Fidkowski and Kitaev formulate their stacking law in terms of three invariants that are equivalent to $\alpha$, $\beta$, and $\gamma$. Their results match ours.

For contrast, we list the invariants of the corresponding bosonic phases in the rightmost column of Figure \ref{fig:Z8table}. There, $H$ denotes the subgroup of unbroken symmetries and $\omega$ denotes $2$-cocycle characterizing the SPT order. These invariants can be obtained from the fermionic invariants \cite{KTY2}. We observe that invertibility is not preserved by bosonization; in particular, only the fermionic SREs with $\gamma=0$ become bosonic SREs. The four bosonic SRE phases have a $\ZZ/2\times\ZZ/2$ stacking law. We also include the two \emph{non-central} super-division algebras $\CC$ and $\CC\ell_1$ at the bottom of the table. These correspond to symmetry-breaking (SB) phases.

\begin{figure}[h]
\centering
\begin{tabular}{ c | l | c | l | l | l }
 $C\ell_{n,0}$ & $A_\text{div}$ & $\alpha,\beta,\gamma$ & fermionic & bosonic & $(H,\omega)$ \\
 \hline
 & & & & & \\
  $0$ & $\RR$ & $0,0,0$ & trivial & trivial & $(\cG,0)$ \\
    $1$ & $C\ell_{1,0}$ & $0,0,1$ & SRE & SB & $(\ZZ_2^T,0)$ \\
  $2$ & $C\ell_{2,0}$ & $0,1,0$ & SRE & SPT & $(\cG,\omega_1)$ \\
  $3$ & $\HH\otimes C\ell_{0,1}$ & $1,1,1$ & SRE & mixed & $(\ZZ_2^\text{diag},\alpha)$ \\
   $4$ & $\HH$ & $1,0,0$ & SRE & SPT & $(\cG,\omega_2)$ \\
     $5$ & $\HH\otimes C\ell_{1,0}$ & $1,0,1$ & SRE & mixed & $(\ZZ_2^T,\alpha)$ \\
  $6$ & $C\ell_{0,2}$ & $1,1,0$ & SRE & SPT & $(\cG,\omega_1+\omega_2)$ \\
   $7$ & $C\ell_{0,1}$ & $0,1,1$ & SRE & SB & $(\ZZ_2^\text{diag},0)$ \\
  & & & & & \\
   - & $\CC$ & - & SB & SB & $(\ZZ_2^F,0)$ \\
   - & $\CC\ell_1$ & - & SB & SB & $(1,0)$ \\
  & & & & & \\
  \hline
\end{tabular}
\caption{the $10$-fold way of $\ZZ_2^F\times\ZZ_2^T$-symmetric fermionic phases}
\label{fig:Z8table}
\end{figure}

\subsection{Class DIII fermions: $\cG=\ZZ_4^{FT}$}

In the following, $\cG = \ZZ_4^{FT}$ denotes the non-trivial extension of $G_b = \ZZ_2^T$ by fermion parity. Let us consider fermionic SRE phases with this symmetry. There are two distinct classes $\beta \in H^1(\ZZ_2^T,\ZZ/2)$, determined by $\beta(t) = 0$ and $\beta(t) = 1$. The trivial $\beta$ has a single $\alpha$, the trivial one, that satisfies $\delta_T \alpha = \sfrac{1}{2} \beta \cup \rho$, up to the proper equivalence.\footnote{The cocycle $\alpha(t,t) = \sfrac{1}{2}$ is nontrivial in $H^2(G_b, U(1)_T)$ but is trivialized by adding a $2$-coboundary on $\cG$ satisfying the proper conditions. See the Appendix for details.} The nontrivial $\beta$ also has a single compatible $\alpha$, up to equivalence: $\alpha(t,t) = \sfrac{1}{4}$.

The trivial phase is represented by the algebra $\CC$ with trivial actions of $p$ and $t$, as always. For the nontrivial phase, consider $A=\End(U)$, where $P$ and $T$ act on $U$ as\begin{equation}P=\left(\begin{array}{cc}1&0\\0&-1\end{array}\right)\qquad\text{and}\qquad M(t)=\left(\begin{array}{cc}0&1\\-i&0\end{array}\right).\end{equation}Then the invariants can be recovered:
\begin{align}M(t)M(t)^{-1T}=e^{2\pi i/4}P\qquad&\Rightarrow\qquad\alpha(t,t)=\sfrac{1}{4}\\PM(t)P^T=e^{i\pi}M(t)\qquad&\Rightarrow\qquad\beta(t)=1\end{align}

According to the rule \eqref{stackinglaw}, stacking two copies of this phase results in the trivial phase:
\begin{equation}(\sfrac{1}{4},1)\cdot (\sfrac{1}{4},1)=(\sfrac{1}{4}+\sfrac{1}{4}+\sfrac{1}{2}\cdot 1\cdot 1,1+1)=(0,0).\end{equation}We find that fermionic SRE phases with symmetry $\ZZ_4^{FT}$ have a $\ZZ/2$ classification, in agreement with the condensed matter literature \cite{tenfold,periodic}. The nontrivial phase appears as a Majorana chain with two dangling modes protected by the symmetry.

\subsection{Unitary $\ZZ/2$ symmetry}

As a last set of examples, let us consider systems with a unitary bosonic symmetry group $G_0=\ZZ/2$, in addition to time-reversal and fermion parity. There are many ways to organize these symmetries into a full symmetry class $(\cG,p,x)$. Here, we consider the five abelian possibilities, which are listed with their fermionic and bosonic phase classifications in Figure \ref{fig:moreexamples}. The first three have $G_b=\ZZ_2\times\ZZ_2^T$, the last two $G_b=\ZZ_4^T$. In the two cases where the central extension of $G_b$ by $\ZZ_2^F$ splits, we use a superscript $\gamma$ to denote the subgroup of the fermionic classification that contains the odd phases.\footnote{An earlier version of this paper incorrectly stated the group of fermionic phases with $\ZZ_2^T\times\ZZ_4^F$ symmetry as $\ZZ_4$.}

\begin{figure}[h]
\centering
\begin{tabular}{ c || c | c }
 symmetry class & fermionic & bosonic \\
 \hline
  & & \\
   $\ZZ_2\times\ZZ_2^T\times\ZZ_2^F$ & $\ZZ_4\times\ZZ_8^\gamma$ & $(\ZZ_2)^4$ \\
    $\ZZ_2\times\ZZ_4^{FT}$ & $\ZZ_2\times\ZZ_2$ & $\ZZ_2\times\ZZ_2$ \\
      $\ZZ_2^T\times\ZZ_4^F$ & $\ZZ_2\times\ZZ_2$ & $\ZZ_2\times\ZZ_2$ \\
    $\ZZ_2^F\times\ZZ_4^T$ & $\ZZ_2\times\ZZ_4^\gamma$ & $\ZZ_2\times\ZZ_2$ \\
    $\ZZ_8^{FT}$ & $\ZZ_2$ & $\ZZ_2$ \\
   & & \\
  \hline
\end{tabular}
\caption{fermionic phases with unitary and anti-unitary symmetries}
\label{fig:moreexamples}
\end{figure}


\appendix

\section{Relations between bosonic and fermionic invariants}

\begin{lemma}\label{betalemma}
For a twisted cocycle $\omega\in Z^2(\cG,U(1)_T)$, the $1$-cochain defined by
\begin{equation}\label{fullbeta}
\sfrac{1}{2}\beta(g):=\omega(g,p)-\omega(p,g)+x(g)\omega(p,p)=\left\{\begin{array}{lr}\omega(g,p)-\omega(p,g)&g\in G_0\\\omega(g,p)-\omega(p,g)+\omega(p,p)&g\notin G_0\end{array}\right.
\end{equation}
is gauge-invariant, satisfies $\beta(gp)=\beta(g)$, takes values in $\{0,\sfrac{1}{2}\}$, and defines a $G_b$-cocycle.
\end{lemma}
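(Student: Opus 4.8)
The plan is to verify each of the four assertions in turn, all as direct consequences of the $x$-twisted cocycle condition \eqref{twistedcocycle} applied to carefully chosen triples involving the central involution $p$. Throughout I will use that $p$ is central and $x(p)=0$, and work in the gauge $Q(1)=\mathds{1}$, $\omega(1,g)=\omega(g,1)=0$. I would first establish the two-term identities that are the workhorses: applying \eqref{twistedcocycle} to $(p,p,g)$, to $(g,p,p)$, and to $(p,g,p)$ gives three relations among $\omega(p,p)$, $\omega(p,g)$, $\omega(pg,p)=\omega(gp,p)$, $\omega(g,p)$, $\omega(p,pg)$, etc.\ (using $pg=gp$), which together express everything in terms of $\omega(g,p)-\omega(p,g)$ and $\omega(p,p)$.

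\textbf{Values in $\{0,\sfrac12\}$.} From the triple $(p,p,g)$ and the triple $(g,p,p)$ one derives that $2\bigl(\omega(g,p)-\omega(p,g)\bigr)$ equals $(\text{something})\cdot\bigl((-1)^{x(g)}-1\bigr)\omega(p,p)$ plus $\omega(p^2,g)-\omega(g,p^2)=\omega(1,g)-\omega(g,1)=0$; combined with the triple $(p,g,p)$ this forces $2\cdot\sfrac12\beta(g)\in\ZZ$, i.e.\ $\beta(g)\in\ZZ/2$. The $x(g)\omega(p,p)$ correction term in \eqref{fullbeta} is exactly what is needed to cancel the anomalous contribution in the anti-unitary case; checking that this cancellation is clean is the step I expect to be the main obstacle, since the twist $(-1)^{x(g)}$ makes the bookkeeping of signs delicate and one must be careful that $2\omega(p,p)=\omega(1,1)+(\text{twist terms})$ is handled correctly (note $2\omega(p,p)\in\ZZ$ is itself needed and follows from $(p,p,p)$).

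\textbf{Gauge invariance.} Under $\omega\mapsto\omega+\delta_x\Lambda$ for a $1$-cochain $\Lambda$, the combination $\omega(g,p)-\omega(p,g)+x(g)\omega(p,p)$ shifts by $(\delta_x\Lambda)(g,p)-(\delta_x\Lambda)(p,g)+x(g)(\delta_x\Lambda)(p,p)$; expanding each using $(\delta_x\Lambda)(a,b)=(-1)^{x(a)}\Lambda(b)-\Lambda(ab)+\Lambda(a)$ and using $gp=pg$, $x(p)=0$, the terms telescope and the $x(g)(\delta_x\Lambda)(p,p)$ piece absorbs the leftover, giving a net shift of zero. \textbf{Periodicity $\beta(gp)=\beta(g)$.} Replace $g$ by $gp$ in \eqref{fullbeta}; since $gp\cdot p=g$, $p\cdot gp=gp^2=g$ (wait, $p\cdot gp = gp\cdot p$ by centrality $=g$), and $x(gp)=x(g)$, one gets $\sfrac12\beta(gp)=\omega(gp,p)-\omega(p,gp)+x(g)\omega(p,p)$; the identities from the chosen triples relate $\omega(gp,p)-\omega(p,gp)$ to $\omega(g,p)-\omega(p,g)$ up to a multiple of $\omega(p,p)$ that is cancelled by the $x(g)\omega(p,p)$ term, yielding equality mod $\ZZ$.

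\textbf{$G_b$-cocycle.} Finally, to see $\beta$ descends to a cocycle on $G_b=\cG/\{1,p\}$, I would check the (untwisted, since $\beta$ is $\ZZ/2$-valued and $\ZZ/2$ has trivial $G_b$-action — or more precisely $x$-twisted, but $2$-torsion kills the twist) cocycle identity $\beta(h)-\beta(gh)+\beta(g)=0$ in $\ZZ/2$. Writing each $\beta$ via \eqref{fullbeta} and substituting, the required identity becomes a $\ZZ/2$-valued consequence of applying \eqref{twistedcocycle} to the triples $(g,h,p)$ and $(p,g,h)$ and comparing: the associativity pentagon-type relation for $\omega$ among $g$, $h$, $p$ (using centrality of $p$ to commute it past $g$ and $h$) produces exactly $\beta(g)+\beta(h)-\beta(gh)$ on one side and a sum of $\omega$ terms that vanishes mod $1$ on the other, with the $\omega(p,p)$ corrections conspiring via $x(g)+x(h)\equiv x(gh)$ to match. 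Well-definedness on $G_b$ is then immediate from $\beta(gp)=\beta(g)$. I expect the sign/twist bookkeeping in this last step, together with the $\{0,\sfrac12\}$-valuedness step, to be where essentially all the real work lies; the gauge-invariance and periodicity claims are comparatively mechanical.
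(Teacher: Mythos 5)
Your proposal follows essentially the same route as the paper's proof: each of the four properties is verified by directly applying the twisted cocycle condition to triples built from $p$ (with gauge invariance checked against $\omega\mapsto\omega+\delta_T L$ for $L(1)=0$, periodicity from the triple $(p,g,p)$, two-torsion from $(g,p,p)$ and $(p,p,g)$, and the $G_b$-cocycle identity from $(g,h,p)$ and $(p,g,h)$), and the cancellations you anticipate go through exactly as in the paper. One small correction: your parenthetical claim that $2\omega(p,p)\in\ZZ$ ``is itself needed and follows from $(p,p,p)$'' is wrong on both counts --- the triple $(p,p,p)$ yields only a trivial identity and $\omega(p,p)$ is not generically half-integral, but fortunately the $\{0,\sfrac{1}{2}\}$-valuedness argument needs only the exact cancellation $\bigl((-1)^{x(g)}-1\bigr)\omega(p,p)+2x(g)\omega(p,p)=0$, which holds for $x(g)\in\{0,1\}$ without any integrality assumption.
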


\begin{proof}
First, $\sfrac{1}{2}\beta(g)$ picks up a factor of
\begin{align}
\begin{split}
(L(g) + (-1)^{x(g)} L(p) - L(gp)) -(L(p) + &L(g) - L(gp)) + x(g) (L(p) + L(p) - L(1))\\
&= - 2 x(g) L(p) + 2x(g) L(p) - x(g) L(1)= 0
\end{split}
\end{align}
under a transformation $\omega \mapsto \omega + \delta_T L$ for some $1$-cochain $L$ of $\cG$ satisfying $L(1)=0$.\footnote{This condition on $L$ ensures that $Q(1)=\mathds{1}$ is preserved.}

Second,
\begin{align}
\begin{split}
\sfrac{1}{2} \beta(gp) &= \omega(gp,p) - \omega(p,gp) + x(gp) \omega(p,p) \\&= \omega(g,p) - \omega(p,g) + x(g) \omega(p,p) -\delta_T \omega(p,g,p) \\&= \sfrac{1}{2}  \beta(g).
\end{split}
\end{align}

Third, 
\begin{align}
\begin{split}
\omega(g,p)-\omega(p,g)&=(-1)^{x(g)}\omega(p,p)-\omega(g,p)-\omega(p,gp)-(\delta_T\omega)(g,p,p)+(\delta_T\omega)(p,g,p)\\&=(-1)^{x(g)}\omega(p,p)-\omega(g,p)-\omega(p,p)+\omega(p,g)+(\delta_T\omega)(p,p,g)\\&=-\omega(g,p)+\omega(p,g)-(1-(-1)^{x(g)})\omega(p,p)
\end{split}
\end{align}
means that $\sfrac{1}{2} \beta$ takes values in the $\ZZ/2$ subgroup of $U(1)$.

Therefore $\sfrac{1}{2}\beta$ defines a $\beta\in C^1(G_b,\ZZ/2)$. Let $g_b,h_b\in G_b$ and choose any lifts $g,h$ to $\cG$. Fourth,       
\begin{align}
\begin{split}
(\delta \beta)(g_b,h_b) &= \sfrac{1}{2} (\beta(g) + \beta(h) - \beta(gh p^{\rho(\bar{g},\bar{h})})) \\&= \sfrac{1}{2} (\beta(g) + \beta(h) - \beta(gh)) \\ 
&= \omega(g,p) - \omega(p,g) + x(g) \omega(p,p) + \omega(h,p) - \omega(p,h) + x(h) \omega(p,p) \\&\qquad\qquad - \omega(gh,p) + \omega(p,gh) -x(gh) \omega(p,p) \\
&= \omega(g,p) - \omega(p,g) + \omega(h,p) - \omega(p,h) +\omega(g,h) - \omega(g,hp) - (-1)^{x(g)} \omega(h,p)  \\
&\qquad\qquad - \omega(g,h) +\omega(p,g) + \omega(pg,h) + 2 x(g) x(h) \omega(p,p)  \\
&= \omega(g,p) + 2x(g) \omega(h,p) - \omega(p,h) + (-1)^{x(g)} \omega(p,h) - \omega(g,p)  + 2 x(g) x(h) \omega(p,p)  \\
&= 2x(g) (\omega(h,p) - \omega(p,h) + x(h) \omega(p,p))  \\
&= 2x(g) \cdot \sfrac{1}{2} \beta(h)\\&= 0. 
\end{split}
\end{align}

\end{proof}

\begin{lemma}\label{gaugelemma}

Each cohomology class $H^2(\cG,U(1)_T)$ contains an element $\omega$ that satisfies, for all $g,h\in\cG$,
\begin{align}
\omega(pg, h) &= \omega(g,h) \label{gauge1}\\ \omega(g,ph) &= \omega(g,h) + \omega(g,p).\label{gauge2}
\end{align}

\end{lemma}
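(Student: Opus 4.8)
The plan is to produce the desired representative $\omega$ by a two-step gauge fixing: first arrange \eqref{gauge1}, then, while preserving it, arrange \eqref{gauge2}. Throughout I work in the gauge $\omega(1,1)=0$ (equivalently $Q(1)=\mathds{1}$) and only use coboundary transformations $\omega\mapsto\omega+\delta_T L$ with $L(1)=0$, so that this normalization is maintained. Recall that $\delta_T$ is the $x$-twisted differential, $(\delta_T L)(g,h) = L(h) - L(gh) + (-1)^{x(g)}L(g)$, and note $x(p)=0$ since $p$ is a fermion-parity element.

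\emph{Step 1: achieving \eqref{gauge1}.} Since $p$ is central, left multiplication by $p$ is an involution on $\cG$ without fixed points (as $p\neq 1$), so $\cG$ is partitioned into pairs $\{g, pg\}$. The idea is to define $L$ so that the coboundary $\delta_T L$ absorbs the difference $\omega(pg,h)-\omega(g,h)$. Expanding the twisted cocycle condition \eqref{twistedcocycle} with the triple $(p,g,h)$ gives $\omega(p,g)+\omega(pg,h) = \omega(g,h)+\omega(p,gh)$ (using $x(p)=0$), hence $\omega(pg,h)-\omega(g,h) = \omega(p,gh)-\omega(p,g)$, which depends on $g$ only through $gh$. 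This is exactly the form of a coboundary: choosing $L$ on each pair $\{g,pg\}$ — say $L(g)=0$ and $L(pg) = -\omega(p,g)$ for one representative $g$ of each pair — one checks that $\omega + \delta_T L$ satisfies $\omega(pg,h)=\omega(g,h)$ for all $g,h$. One must verify this $L$ is consistently defined and that the resulting identity holds; this is a direct computation using \eqref{twistedcocycle}.

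\emph{Step 2: achieving \eqref{gauge2} while keeping \eqref{gauge1}.} Now assume $\omega$ already satisfies \eqref{gauge1}. Expand \eqref{twistedcocycle} on the triple $(g,h,p)$: this yields $\omega(g,h)+\omega(gh,p) = (-1)^{x(g)}\omega(h,p) + \omega(g,hp)$, i.e. $\omega(g,hp) - \omega(g,h) = \omega(gh,p) - (-1)^{x(g)}\omega(h,p)$. Comparing with the target \eqref{gauge2}, which demands $\omega(g,hp)-\omega(g,h) = \omega(g,p)$, I need the additional coboundary to fix up the discrepancy between $\omega(g,p)$ and $\omega(gh,p) - (-1)^{x(g)}\omega(h,p)$. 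The natural choice is $L(g) = c\,\omega(g,p)$ for an appropriate constant, or more precisely to define $L$ on $\cG$ so that $\delta_T L$ contributes $\omega(g,p)$-type terms; concretely one sets $L(g)$ proportional to $\omega(g,p)$ and checks, using the cocycle identity from the $(g,p,\cdot)$ and $(g,\cdot,p)$ triples together with centrality of $p$ and \eqref{gauge1}, that \eqref{gauge2} now holds. The subtlety is ensuring that this second coboundary does not spoil \eqref{gauge1}: since the correction $L$ is built from $\omega(\cdot,p)$ and \eqref{gauge1} already says $\omega(pg,h)=\omega(g,h)$ (in particular $\omega(pg,p)=\omega(g,p)$), the transformation $\delta_T L$ will be invariant under $g\mapsto pg$ in its first slot, preserving \eqref{gauge1}.

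\emph{Main obstacle.} The real work is bookkeeping in Step 2: one must simultaneously maintain \eqref{gauge1} and impose \eqref{gauge2}, and verify that the chosen $L$ is well-defined (no circular dependencies) and that the twisted — rather than untwisted — coboundary does the job, since the factors of $(-1)^{x(g)}$ change which terms cancel. A convenient way to organize this is to note that \eqref{gauge1} says $\omega$ descends in its first argument to a function that only depends on $b(g)\in G_b$, and \eqref{gauge2} then pins down the dependence on the $p$-part of the second argument; once phrased this way, both conditions become statements about a normalized representative, and the existence of such a representative follows from the freedom in choosing $L$ on coset representatives of $\langle p\rangle$ in $\cG$. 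I would close by remarking that conditions \eqref{gauge1}–\eqref{gauge2} are precisely what is needed to write $\omega(g,h) = \alpha(\bar g,\bar h) + \sfrac{1}{2}\beta(\bar g)t(h)$ as used in Section \ref{stackingsection}, tying the lemma back to its application.
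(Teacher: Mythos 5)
Your overall strategy --- absorb the failure of \eqref{gauge1} into a twisted coboundary built from $\omega(p,\cdot)$ --- is the same as the paper's, but your explicit construction in Step 1 has a genuine gap, and your Step 2 is solving a problem that does not exist. For Step 1: the cocycle identity on the triple $(p,g,h)$ gives $\omega(pg,h)-\omega(g,h)=\omega(p,gh)-\omega(p,g)$, so \eqref{gauge1} is equivalent to $\omega(p,k)=\omega(p,1)=0$ for \emph{all} $k$; in particular it forces $\omega(p,p)=0$. Your prescription ($L(g)=0$ and $L(pg)=-\omega(p,g)$ on a representative $g$ of each pair $\{g,pg\}$) cannot achieve this at the pair $\{1,p\}$: taking $1$ as the representative gives $L(p)=-\omega(p,1)=0$, so the shifted cocycle still has $\omega'(p,p)=\omega(p,p)\mp 2L(p)=\omega(p,p)\neq 0$ in general, and consequently $\omega'(p,pg)=\omega(p,p)$ on the non-representative half of every pair (there is also a sign inconsistency between your $+\delta_T L$ convention and the formula for $L(pg)$, but that is cosmetic). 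The missing idea is that $L(p)$ must be chosen as a square root, $L(p)=\sfrac{1}{2}\omega(p,p)$, which exists in $\RR/\ZZ$ but only up to the two-fold ambiguity $L(p)\mapsto L(p)+\sfrac{1}{2}$; this is precisely the choice made in \eqref{gaugeL}, and the residual ambiguity is what produces the equivalence relation on $\alpha$ in the Theorem that follows, so it cannot be glossed over. (Your parenthetical claim that $\omega(p,gh)-\omega(p,g)$ ``depends on $g$ only through $gh$'' is also not literally true; the correct statement is that it vanishes for all $h$ iff $\omega(p,\cdot)$ is constant.)

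Your Step 2 is unnecessary: once $\omega(p,\cdot)\equiv 0$ holds, condition \eqref{gauge2} follows from the twisted cocycle identity alone, with no further gauge transformation. Indeed, $0=\delta_T\omega(g,p,h)$ gives
\begin{equation*}
\omega(g,ph)=\omega(g,p)+\omega(gp,h)-(-1)^{x(g)}\omega(p,h)=\omega(g,p)+\omega(g,h),
\end{equation*}
using centrality of $p$, condition \eqref{gauge1}, and $\omega(p,h)=0$. Your proposed second coboundary $L(g)=c\,\omega(g,p)$ would therefore have to be trivial; as written you neither exhibit a $c$ that works nor notice that none is needed, so the ``bookkeeping'' you defer is never actually carried out. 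The proof becomes correct once you (i) insert the half-integer choice $L(p)=\sfrac{1}{2}\omega(p,p)$ into Step 1 and (ii) replace Step 2 by the one-line cocycle computation above.
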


\begin{proof}
For an arbitrary $2$-cocycle $W \in Z^2(\cG, U(1)_T)$, define
\begin{equation}
\omega = W - \delta_T L
\end{equation}
where $L \in C^1(\cG, U(1)_T)$ satisfies
\begin{align}\label{gaugeL}
\begin{split}
L(1) &= 0 \\  L(p) &=  \sfrac{1}{2} W(p,p) \textrm{\ \ or \ }  \sfrac{1}{2} W(p,p) + \sfrac{1}{2} \\L(p\bar{g}) &= L(\bar{g}) - W(p,\bar{g}) + L(p).
\end{split}
\end{align}
Here, we abuse notation by letting $\bar g$ denote a $g\in\cG$ with $t(g)=0$. This implies $L(p) = \sfrac{1}{2} W(p,p)$. We have fixed $L(p\bar{g})$ in terms of $L(\bar{g})$ and $L(p)$ but left $L(\bar{g})$ undetermined, while $L(p)$ is fixed up to a $\sfrac{1}{2}$.  

We see that
\begin{equation}\label{cond1}
\omega(p,p) = W(p,p) - (-1)^{x(p)} L(p) - L(p) + L(1) = W(p,p) - 2 \cdot \sfrac{1}{2} W(p,p) = 0
\end{equation}
and
\begin{equation}\label{cond2}
\omega(p,\bar{g}) = W(p, \bar{g}) - \left( (-1)^{x(p)} L(\bar{g}) + L(p) - L(p\bar{g}) \right) =W(p, \bar{g}) - W(p,\bar{g}) +\sfrac{1}{2} W(p,p) -\sfrac{1}{2} W(p,p) =  0.
\end{equation}

Next we show that any $\omega$ satisfying \eqref{cond1} and \eqref{cond2} must also satisfy the gauge conditions \eqref{gauge1} and \eqref{gauge2}. First,
\begin{equation}
\omega(p, p\bar{g}) = -\delta_T \omega (p, p, \bar{g}) - (-1)^x(p) \omega(p, \bar{g}) + \omega(p,p) + \omega(1, \bar{g}) = 0.
\end{equation}
Similarly, computing $0 = \delta_T \omega (p, \bar{g}, \bar{h})$ shows that $\omega(\bar{g}p, \bar{h}) = \omega(\bar{g}, \bar{h})$ and computing $0 = \delta_T \omega (p, \bar{g}, \bar{h}p)$ shows that $\omega (\bar{g}p, \bar{h}p) = \omega(\bar{g}, \bar{h}p)$.  Putting these together, we see that \eqref{gauge1} is satisfied.

Now we compute $0 =\delta_T \omega (\bar{g}, p, \bar{h})$ which shows that $\omega(\bar{g},p \bar{h}) = \omega(\bar{g}, \bar{h}) + \omega (\bar{g}, p)$ and $0 = \delta_T \omega(\bar{g}, p, p\bar{h})$ which shows that $\omega(\bar{g}, \bar{h}) = \omega(\bar{g}, p \bar{h}) + \omega(\bar{g},p).$ Putting these together, we see that \eqref{gauge2} is satisfied.

\end{proof}

\begin{lemma}\label{isomlemma}
Given a trivialization $t$, the map
\begin{equation}
\omega(g,h)=\alpha(\bar g,\bar h)+\sfrac{1}{2}\beta(\bar g)t(h)
\end{equation}
defines a bijection from pairs $(\alpha,\beta)\in C^2(G_b,U(1)_T)\times C^1(G_b,\ZZ/2)$ that satisfy $\delta_T\alpha=\sfrac{1}{2}\beta\cup\rho$ and $\delta\beta=0$ (where $\sfrac{1}{2}\beta$ is regarded as a $U(1)_T$-valued cocycle) to twisted cocycles $\omega\in Z^2(\cG,U(1)_T)$ that satisfy \eqref{gauge1} and \eqref{gauge2} for all $g,h\in\cG$. In particular, for all $g_b,h_b\in G_b$, this map has an inverse
\begin{align}\label{inverse}
\begin{split}\alpha(g_b,h_b)&=\omega(s(g_b),s(h_b))\\\sfrac{1}{2}\beta(g_b)&=\omega(s(g_b),p).
\end{split}
\end{align}
\end{lemma}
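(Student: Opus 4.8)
\textbf{Proof proposal for Lemma \ref{isomlemma}.}

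The plan is to verify that the map $\Phi:(\alpha,\beta)\mapsto\omega$ given by $\omega(g,h)=\alpha(\bar g,\bar h)+\sfrac12\beta(\bar g)t(h)$ lands in the stated target set, that the formulas \eqref{inverse} define a two-sided inverse, and that both directions are well-defined on the indicated domains. I would organize the argument in four steps.

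First, I would check that $\Phi$ is well-defined: given $(\alpha,\beta)$ with $\delta_T\alpha=\sfrac12\beta\cup\rho$ and $\delta\beta=0$, the cochain $\omega$ is a genuine twisted $2$-cocycle on $\cG$. This is a direct substitution into the $x$-twisted cocycle condition \eqref{twistedcocycle}. Writing out $\omega(g,h)+\omega(gh,k)-(-1)^{x(g)}\omega(h,k)-\omega(g,hk)$, the $\alpha$-terms assemble (after using $\overline{gh}=\bar g\bar h$ in $G_b$, with the correction that the lift $gh$ of $\bar g\bar h$ differs from $s(\bar g\bar h)$ by $p^{\rho(\bar g,\bar h)}$) into $\delta_T\alpha(\bar g,\bar h,\bar k)=\sfrac12(\beta\cup\rho)(\bar g,\bar h,\bar k)=\sfrac12\beta(\bar g)\rho(\bar h,\bar k)$; the $\beta$-terms, using $t(gh)=t(g)+t(h)+\rho(\bar g,\bar h)$ (mod $1$) and $\delta\beta=0$, assemble into $-\sfrac12\beta(\bar g)\rho(\bar h,\bar k)$, and the two cancel. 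I would also note $\omega(g,h)$ is independent of the chosen lifts precisely because $\alpha$ is defined up to a $\cG$-coboundary with $G_b$-arguments (the footnote after \eqref{equialg}'s discussion) and $\beta$ factors through $G_b$; this is the step where the non-split subtlety with $\rho$ must be handled carefully.

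Second, I would check that $\Phi$ lands among cocycles satisfying \eqref{gauge1}, \eqref{gauge2}. Since $t(p)=1$ and $b(p)=1$ (so $\bar p$ is trivial in $G_b$), we get $\omega(pg,h)=\alpha(\bar g,\bar h)+\sfrac12\beta(\bar g)t(h)=\omega(g,h)$, which is \eqref{gauge1}; and $\omega(g,ph)=\alpha(\bar g,\bar h)+\sfrac12\beta(\bar g)(t(h)+1)=\omega(g,h)+\sfrac12\beta(\bar g)$, while $\omega(g,p)=\alpha(1,1)+\sfrac12\beta(\bar g)\cdot 1$; choosing the normalized representative of $\alpha$ with $\alpha(1,1)=0$ (permissible since $\alpha$ is defined up to coboundary and we may assume the gauge $Q(1)=\mathds1$) gives \eqref{gauge2}. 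Conversely, for the inverse direction I would start from a cocycle $\omega$ satisfying the gauge conditions — which exists in every class by Lemma \ref{gaugelemma} — define $\alpha$ and $\beta$ by \eqref{inverse}, invoke Lemma \ref{betalemma} to get $\delta\beta=0$ and that $\sfrac12\beta$ is $\ZZ/2$-valued, and verify $\delta_T\alpha=\sfrac12\beta\cup\rho$ by restricting the cocycle identity for $\omega$ to the section $s:G_b\to\cG$, tracking the $p^{\rho(\bar g,\bar h)}$ discrepancies via \eqref{gauge2}. Note \eqref{cond1}--\eqref{cond2} in the proof of Lemma \ref{gaugelemma} show $\omega(p,p)=0$, so the two formulas for $\sfrac12\beta$ in \eqref{fullbeta} and in \eqref{inverse} agree.

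Third, I would verify the two compositions are the identity. Starting from $(\alpha,\beta)$, applying $\Phi$ then \eqref{inverse}: $\omega(s(g_b),s(h_b))=\alpha(g_b,h_b)+\sfrac12\beta(g_b)t(s(g_b\cdot))$; here $t\circ s$ should be taken to vanish for the standard normalized section, recovering $\alpha(g_b,h_b)$, and $\omega(s(g_b),p)=\alpha(g_b,1)+\sfrac12\beta(g_b)=\sfrac12\beta(g_b)$ using $\alpha(g_b,1)=0$. Starting from a gauge-fixed $\omega$, applying \eqref{inverse} then $\Phi$: one must show $\alpha(\bar g,\bar h)+\sfrac12\beta(\bar g)t(h)=\omega(g,h)$ for all $g,h$, i.e. $\omega(s(\bar g),s(\bar h))+\omega(s(\bar g),p)\,t(h)=\omega(g,h)$; writing $g=s(\bar g)p^{t(g)}$, $h=s(\bar h)p^{t(h)}$ and repeatedly applying \eqref{gauge1} to strip $p$'s from the left slot and \eqref{gauge2} to strip $p$'s from the right slot reduces the left side to the right side.

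\textbf{Main obstacle.} I expect the genuinely delicate point to be the bookkeeping around lifts when $\cG$ does not split: $\overline{gh}=\bar g\bar h$ only up to the cocycle $\rho$, so every place where I write $\omega(\cdot,\cdot)$ with a product argument hides a factor $p^{\rho}$, and it is exactly the mismatch between these factors and the clause $t(gh)=t(g)+t(h)+\rho(\bar g,\bar h)$ that produces the term $\sfrac12\beta\cup\rho$ on the nose. Getting the signs and the $\sfrac12$-normalizations consistent there — and confirming that the claimed domain condition $\delta_T\alpha=\sfrac12\beta\cup\rho$ is exactly what makes $\omega$ a cocycle, no more and no less — is the crux; the gauge-condition checks and the inverse-composition identities are then routine manipulations with \eqref{gauge1}--\eqref{gauge2}.
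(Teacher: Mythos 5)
Your proposal is correct and follows essentially the same route as the paper's proof: the forward cocycle check via $\delta_T\alpha=\sfrac{1}{2}\beta\cup\rho$, $\delta\beta=0$, and $\delta t=\rho$; the direct verification of \eqref{gauge1}--\eqref{gauge2}; the inverse direction via Lemma \ref{betalemma} and the cocycle identity for $\omega$ restricted to the section with the $p^{\rho}$ discrepancies tracked through \eqref{gauge2}; and the composition checks by writing $g=p^{t(g)}s(\bar g)$ and stripping $p$'s with the gauge conditions. The only blemish is the slip writing $\alpha(1,1)$ where $\alpha(\bar g,1)$ is meant in the \eqref{gauge2} check; both vanish under the normalization $Q(1)=\mathds{1}$, so nothing is lost.
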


\begin{proof}
First we show that $\omega$ is a twisted cocycle:
\begin{align}
\begin{split}
(\delta_T\omega)(g,h,k)&=(-1)^{x(g)}\omega(h,k)+\omega(g,hk)-\omega(g,h)-\omega(gh,k)\\&=(-1)^{x(g)}\alpha(\bar h,\bar k)+\alpha(\bar g,\bar{hk})-\alpha(\bar g,\bar h)-\alpha(\bar{gh},\bar k)\\&\qquad\qquad+\sfrac{1}{2}(-1)^{x(g)}\beta(\bar h)t(k)+\sfrac{1}{2}\beta(\bar g)t(hk)-\sfrac{1}{2}\beta(\bar g)t(h)-\sfrac{1}{2}\beta(\bar{gh})t(k)\\&=(\delta_T\alpha)(\bar g,\bar h,\bar k)+\sfrac{1}{2}(\delta_T\beta)(\bar g,\bar h)t(k)-\sfrac{1}{2}\beta(\bar g)(\delta t)(h,k)\\&=0.
\end{split}
\end{align}
Next we verify that $\omega$ satisfies the gauge conditions: 
\begin{equation}
\omega(pg,h)=\alpha(\bar{pg},\bar h)+\sfrac{1}{2}\beta(\bar{pg})t(h)=\alpha(\bar g,\bar h)+\sfrac{1}{2}\beta(\bar g)t(h)=\omega(g,h)
\end{equation}
\begin{equation}
\omega(g,ph)=\alpha(\bar g,\bar{ph})+\sfrac{1}{2}\beta(\bar g)t(ph)=\alpha(\bar g,\bar h)+\sfrac{1}{2}\beta(\bar g)(t(h)+1)=\omega(g,h)+\omega(g,p).
\end{equation}

Then we check the conditions for $\alpha$ and $\beta$. For these two calculations, let $\bar g$ denote $g_b$ and $g$ denote $s(g_b)$. Note that $s(\bar{gh})=p^{\rho(\bar g,\bar h)}gh$. Then
\begin{align}
\begin{split}
(\delta_T\alpha)(\bar g,\bar h,\bar k)&=(-1)^{x(g)}\alpha(\bar h,\bar k)+\alpha(\bar g,\bar{hk})-\alpha(\bar g,\bar h)-\alpha(\bar{gh},\bar k)\\&=(-1)^{x(g)}\omega(h,k)+\omega(g,p^{\rho(\bar h,\bar k)}hk)-\omega(g,h)-\omega(p^{\rho(\bar g,\bar h)}gh,k)\\&=(\delta_T\omega)(g,h,k)+\left\{\text{terms of the form }\omega(p^-,-)\right\}+\omega(g,p^{\rho(\bar h,\bar k)})\\&\qquad\qquad+(\delta_T\omega)(p^{\rho(\bar g,\bar h)},gh,k)-(\delta_T\omega)(g,p^{\rho(\bar h,\bar k)},hk)+(\delta_T\omega)(p^{\rho(\bar h,\bar k)},g,hk)\\&=\sfrac{1}{2}\beta(\bar g)\rho(\bar h,\bar k).
\end{split}
\end{align}

The object $\sfrac{1}{2}\beta$ defined in \eqref{inverse} is the gauge-fixed form of $\eqref{fullbeta}$. Then, by \ref{betalemma}, it defines a $\beta\in Z^1(G_b,\ZZ/2)$.

It remains to show that these maps are indeed inverses. Since $\sfrac{1}{2}\beta$ is the image of a $\ZZ/2$-valued cocycle $\beta$, $\omega$ can be written with a minus sign like $\omega=\alpha-\sfrac{1}{2}\beta\cup t$. Note also that $s(\bar g)=p^{t(g)}g$. Then
\begin{equation}
\omega(g,h)=\alpha(\bar g,\bar h)-\sfrac{1}{2}\beta(\bar g)t(h)=\omega(p^{t(g)}g,p^{t(h)}h)-\omega(p^{t(g)}g,p)t(h)=\omega(g,h),
\end{equation}
\begin{equation}
\alpha(g_b,h_b)=\omega(s(g_b),s(h_b))=\alpha(g_b,h_b)+\sfrac{1}{2}\beta(g_b)t(s(g_b))=\alpha(g_b,h_b),
\end{equation}
\begin{equation}
\sfrac{1}{2}\beta(g_b)=\omega(s(g_b),p)=\alpha(g_b,1)+\sfrac{1}{2}\beta(g_b)t(p)=\sfrac{1}{2}\beta(g_b).
\end{equation}
\end{proof}

\begin{theorem}
$H^2(\cG, U(1)_T)$ equals, as a set, the set of pairs $(\alpha,\beta)$ (see \ref{isomlemma}) modulo the equivalence $(\alpha',\beta) \sim (\alpha,\beta)$ if $\alpha' = \alpha + \delta_T \lambda$ with $\lambda$ a cochain in $C^1(\cG,U(1)_T)$ satisfying $\lambda(s(g_b)p) = \lambda(s(g_b)) + \lambda(p)$.\footnote{Had we chosen a different representative $\rho' = \rho + \delta \mu$ of $[\rho]$ to describe the extension of $G_b$ by $\ZZ_2^F$, we would have considered a different set of cochains $\alpha$ (modulo coboundaries), shifted by $ \sfrac{1}{2} \beta \cup \mu$, but their counting would be the same.}
\end{theorem}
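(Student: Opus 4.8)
The plan is to assemble the statement from the three preceding lemmas. By Lemma~\ref{gaugelemma} every class in $H^2(\cG,U(1)_T)$ has a representative $\omega$ satisfying the gauge conditions \eqref{gauge1}--\eqref{gauge2}, and by Lemma~\ref{isomlemma} such representatives are in bijection, via $\omega_{(\alpha,\beta)}(g,h)=\alpha(\bar g,\bar h)+\sfrac12\beta(\bar g)t(h)$, with the pairs $(\alpha,\beta)$ obeying $\delta_T\alpha=\sfrac12\beta\cup\rho$, $\delta\beta=0$. Hence $(\alpha,\beta)\mapsto[\omega_{(\alpha,\beta)}]$ is a well-defined surjection onto $H^2(\cG,U(1)_T)$, and the content of the theorem is the description of its fibres: one must show $[\omega_{(\alpha,\beta)}]=[\omega_{(\alpha',\beta')}]$ precisely when $\beta=\beta'$ and $\alpha'=\alpha+\delta_T\lambda$ with $\lambda\in C^1(\cG,U(1)_T)$ of the stated type, where when $\cG$ does not split this equality is to be read after restricting the $\cG$-coboundary to the section, i.e. $(\alpha'-\alpha)(g_b,h_b)=(\delta_T\lambda)(s(g_b),s(h_b))$.

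First I would show that $\beta$ depends only on the cohomology class, so that equality of classes forces $\beta=\beta'$ and the fibres can only collapse $\alpha$. For a gauge-fixed $\omega$, the normalization $\omega(1,\cdot)=0$ together with \eqref{gauge1} gives $\omega(p,g)=0$ and $\omega(p,p)=0$, so the combination $\sfrac12\beta(g)=\omega(g,p)-\omega(p,g)+x(g)\omega(p,p)$ of Lemma~\ref{betalemma} reduces to $\omega(g,p)$, which by the inverse formula \eqref{inverse} equals $\sfrac12\beta(\bar g)$. Since by Lemma~\ref{betalemma} this combination is unchanged under $\omega\mapsto\omega+\delta_T L$, it is a function of the class alone, so $[\omega_{(\alpha,\beta)}]=[\omega_{(\alpha',\beta')}]\Rightarrow\beta=\beta'$; then the $\beta$-terms cancel and $\omega_{(\alpha',\beta)}-\omega_{(\alpha,\beta)}=(\alpha'-\alpha)\circ(b\times b)$.

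It remains to decide when this pulled-back difference is a $\cG$-coboundary. For the forward direction, suppose $(\alpha'-\alpha)\circ(b\times b)=\delta_T\Lambda$ for some $\Lambda\in C^1(\cG,U(1)_T)$. Because the left-hand side is pulled back from $G_b$ it is constant on $\ZZ_2^F$-cosets in each slot; comparing $(\delta_T\Lambda)(pg,h)$ with $(\delta_T\Lambda)(g,h)$ then forces $g\mapsto\Lambda(pg)-\Lambda(g)$ to be constant, hence equal to $\Lambda(p)$ (with $\Lambda(1)=0$, which is automatic), and applying this to $g$ and to $pg$ forces $2\Lambda(p)=0$. In particular $\Lambda(s(g_b)p)=\Lambda(s(g_b))+\Lambda(p)$, so $\lambda:=\Lambda$ is of the required type and evaluating the coboundary relation at section elements gives $\alpha'=\alpha+\delta_T\lambda$. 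Conversely, given such a $\lambda$, set $\mu:=\lambda\circ s$; using the defining relation for $\lambda$ together with $s(g_b)s(h_b)=p^{\rho(g_b,h_b)}s(g_bh_b)$ one computes $(\alpha'-\alpha)(g_b,h_b)=(\delta_T\mu)(g_b,h_b)-\lambda(p)\,\rho(g_b,h_b)$, and then, using $\rho\circ(b\times b)=\delta t$ and $\delta\beta=0$, one verifies that $(\alpha'-\alpha)\circ(b\times b)$ is a $\cG$-coboundary, so the two classes coincide. The main obstacle is exactly this last step in the non-split case: one must track the twist in $s(g_b)s(h_b)=p^{\rho}s(g_bh_b)$, which is precisely the discrepancy that the side condition $\lambda(s(g_b)p)=\lambda(s(g_b))+\lambda(p)$ is engineered to absorb, and which is the reason $\alpha$ is only defined up to a $\cG$-coboundary with $G_b$-arguments rather than a genuine $G_b$-coboundary. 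Finally, the footnote is immediate: replacing $\rho$ by $\rho+\delta m$ shifts every solution $\alpha$ of $\delta_T\alpha=\sfrac12\beta\cup\rho$ by $\sfrac12\beta\cup m$ (using $\delta\beta=0$), a bijection of solution sets preserving all counts.
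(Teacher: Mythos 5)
Your proposal is correct and follows essentially the same route as the paper: reduce to Lemmas \ref{betalemma}--\ref{isomlemma}, observe that $\beta$ is a class invariant, and then characterize the residual coboundary freedom in $\alpha$. The only cosmetic difference is that you derive the side condition $\lambda(s(g_b)p)=\lambda(s(g_b))+\lambda(p)$ (and $2\lambda(p)=0$) directly from the requirement that $\delta_T\lambda$ be constant on $\ZZ_2^F$-cosets, whereas the paper reads it off from the structure \eqref{gaugeL} of the gauge-fixing cochains $L$, $L'$ and the cocycle $\kappa$; both yield the same residual freedom.
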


\begin{proof}
The preceding lemmas show that the set of twisted cocycles $\omega$ satisfying the gauge conditions \eqref{gauge1} and \eqref{gauge2} is equivalent to the set of pairs $(\alpha, \beta)$. After transforming $\omega$ into this gauge, there remains freedom to choose $L(g)$ for each $g\in\cG$ such that $t(g)=0$, and to shift $L(p)$ by $\sfrac{1}{2}$. We have already seen that $\beta$ is invariant under an arbitrary gauge transformation. However, there is some residual gauge freedom for $\alpha$.

Let $\omega' = \omega + \delta_T \lambda$ be another $2$-cocycle satisfying the gauge conditions. It takes the form $W - \delta_T L'$, with $L'$ possibly differing from $L$ in its values on $s(g_b)$ and $p$. We see from $\delta_T \lambda = \omega' - \omega = \delta_T (L'-L)$ that $\lambda = L'-L + \kappa$ where $\kappa$ is a twisted $1$-cocycle. Then, by \eqref{gaugeL}, $\lambda(s(g_b)p)=\lambda(s(g_b)) + \lambda(p)$. The quantities $L(p)$, $L'(p)$, $\kappa(p)$, and therefore $\lambda(p)$, can each be chosen to be $0$ or $\sfrac{1}{2}$. Finally, by \eqref{inverse}, this freedom in gauge-fixed $\omega$ translates into the desired freedom in $\alpha$.

\end{proof}



\begin{thebibliography}{99}

\bibitem{Hastings}M.B.~Hastings, ``Entropy and Entanglement in Quantum Ground States,'' Phys. Rev. B \textbf{76}, 035114 (2007).

\bibitem{MPSreview} F.~Verstraete, J.~I.~Cirac, V.~Murg, ``Matrix Product States, Projected Entangled Pair States, and variational renormalization group methods for quantum spin systems,'' Adv. Phys. {\bf 57}, 143 (2008).

\bibitem{ChenGuWenone} X.~Chen, Z.~Gu, X.~Wen, ``Classification of Gapped Symmetric Phases in 1D Spin Systems'', Phys. Rev. B 83, 035107 (2011).

\bibitem{fidkit} L.~Fidkowski, A.~Kitaev, ``Topological Phases of Fermions in One Dimension,'' Phys. Rev. B, \textbf{83}, 075103 (2011).

\bibitem{ChenGuWentwo} X.~Chen, Z.~Gu, X.~Wen, ``Complete Classification of 1D Gapped Quantum Phases in Interacting Spin Systems'', Phys. Rev. B {\bf 84}, 235128 (2011).

\bibitem{FMPS} N. Bultinck, D. J. Williamson, J. Haegeman, and F. Verstraete, ``Fermionic Matrix Product States and One-Dimensional Topological Phases,'' arXiv: 1610.07849 [cond-mat.str-el]

\bibitem{KTY2} A. Kapustin, A. Turzillo, M. You, ``Spin Topological Field Theory and Fermionic Matrix Product States,'' arXiv: 1610.10075 [cond-mat.str-el]

\bibitem{ShiozakiRyu} K.~Shiozaki, S.~Ryu, ``Matrix product states and equivariant topological field theories for bosonic symmetry-protected topological phases in 1+1 dimensions,'' arXiv: 1607.06504 [cond-mat.str-el]

\bibitem{KapustinThorngren} A. Kapustin, R. Thorngren, ``Fermionic SPT phases in higher dimensions and bosonization,'' arXiv: 1701.08264 [cond-mat.str-el]

\bibitem{KTY1} A. Kapustin, A. Turzillo, M. You, ``Topological Field Theory and Matrix Product States.'' Phys. Rev. B 96, 075125 (2017).

\bibitem{MPSclassif} N.~Schuch, D.~Perez-Garc\'ia, I.~Cirac, ``Classifying Quantum Phases using MPS and PEPS,'' arXiv: 1010.3732 [cond-mat.str-el]

\bibitem{Ostrik} V.~Ostrik ``Module Categories, Weak Hopf Algebras and Modular Invariants,'' Transformation Groups Vol.2, no. 2, 177-206 (2003).

\bibitem{KT} A.~Kapustin, A.~Turzillo, ``Equivariant Topological Quantum Field Theory and Symmetry Protected Topological Phases,'' J. High Energ. Phys. \textbf{03} 006 (2017).

\bibitem{duality23} A.~Turzillo, M.~You, ``Duality and Stacking of Bosonic and Fermionic SPT Phases,'' arXiv: 2311.18782 [cond-mat.str-el]

\bibitem{wall}C.T.C~Wall, ``Graded Brauer groups,'' Journal f\"ur die reine und angewandte Mathematik, 213, 187–199 (1964).

\bibitem{trimble}T.~Trimble, ``The Super Brauer Group and Super Division Algebras,'' April 27, 2005, http://math.ucr.edu/home/baez/trimble/superdivision.html

\bibitem{FK2} L.~Fidkowski, A.~Kitaev, ``The Effects of Interactions on the Topological Classification of Free Fermion Systems,'' Phys. Rev. B \text{81}, 134509 (2009).

\bibitem{tenfold} S.~Ryu, A.P.~Schnyder, A.~Furusaki, A.W.W.~Ludwig, ``Topological Insulators and Superconductors: Tenfold Way and Dimensional Hierarchy,'' New J. Phys. \textbf{12}, 065010 (2010).

\bibitem{periodic} A.~Kitaev, ``Periodic Table for Topological Insulators and Superconductors,'' AIP Conf. Proc. \textbf{22}, 1134 (2009).

\end{thebibliography}
\end{document}